\newtheorem{thm}{Theorem}
\newtheorem{lem}[thm]{Lemma}
\newtheorem{lemma}[thm]{Lemma}
\newtheorem{prop}[thm]{Proposition}
\newtheorem{defn}[thm]{Definition}
\newtheorem{fact}[thm]{Fact}
\newtheorem{cor}[thm]{Corollary}
\newtheorem{clm}[thm]{Claim}
\newcommand\cD{{\mathcal D}}
\newcommand\cF{{\mathcal F}}
\newcommand\cG{{\mathcal G}}
\newcommand\cH{{\mathcal H}}
\newcommand{\ignore}[1]{}
\title{Conscious and controlling elements in combinatorial 
group testing problems with more defectives}
\begin{document}

\author{
D\'aniel Gerbner\thanks{Research supported by the J\'anos Bolyai Research Fellowship of the Hungarian Academy of Sciences.} \thanks{Research supported by the National Research,
Development and Innovation Office -- NKFIH, grant K116769.}
\and
M\'at\'e Vizer\thanks{Research supported by the National Research, Development and Innovation
Office -- NKFIH, grant SNN 116095.}}

\date{MTA R\'enyi Institute \\
Hungary H-1053, Budapest, Re\'altanoda utca 13-15.\\
\small \texttt{gerbner@renyi.hu, vizermate@gmail.com}\\
\today}

\maketitle

\begin{abstract}
In combinatorial group testing problems Questioner needs to find a defective element $x \in [n]$ by testing subsets of $[n]$. In \cite{GV2017} the authors introduced a new model, where each element knows the answer for those queries that contain it and each element should be able to identify the defective one.

In this article we continue to investigate this kind of models with more defective elements. We also consider related models inspired by secret sharing models, where the elements should share information among them to find out the defectives. Finally the adaptive versions of the different models are also investigated.

\end{abstract}

\vspace{4mm}

\noindent
{\bf Keywords:} Combinatorial Group Testing, defectives, cancellative 

\noindent
{\bf AMS Subj.\ Class.\ (2010)}:  	94A50

\section{Introduction}
In the most basic \textit{model} of \textit{combinatorial group testing} Questioner needs to find a special element $x$ of $\{1,2,...,n\}(=:[n])$ by asking minimal number of \textit{queries} (or \textit{group tests} or \textit{pools}) of type "does $x \in F \subset [n]?$". Special elements are usually called \textit{defective} (or \textit{positive}). For every combinatorial group testing problem there are at least two main approaches: whether it is \textit{adaptive (or sequential)} or \textit{non-adaptive(or oblivious)}. In the adaptive scenario Questioner asks queries depending on the answers for the previously asked queries, however in the non-adaptive version Questioner needs to pose all the queries at the beginning. We call the \textit{complexity} of a specific combinatorial group testing problem the number of the queries needed to ask by Questioner in the worst case during an optimal strategy. 


Combinatorial group testing problems were first considered during the World War II by Dorfman \cite{D1943} in the context of mass blood testing. Since then group testing techniques have had many different applications, for example in fault diagnosis in optical networks  \cite{HPWYC2007}, in quality control in product testing
\cite{SG1959} or failure detection in wireless sensor networks \cite{LLLG2013}. In this article we will mainly discuss non-adaptive models. The interested reader can find many variants and generalizations of the basic non-adaptive model and also many applications in the book \cite{DH2006}.


\subsection*{Description of the new model}

In \cite{GV2017} the authors introduced new combinatorial group testing models, inspired by the results of Tapolcai et al.~\cite{TRHHS2014,TRHGHS2016}.

The main novel ingredient of these combinatorial group testing models is that the elements are conscious and they distrust the Questioner, thus they want to control the tests they are involved in. So we introduced the following extra condition: {\bf each element knows the answer for those queries that contain it}, and the goal: each element should be able to identify the defective one.

Motivated by secret sharing schemes (see e.g. \cite{B2011}), we also consider the following variant: the elements can work together and share their knowledge. In this case we require certain sets of elements to be able to identify the defective, while we require other sets to be unable to identify the defective element.
We emphasize that we do not deal with the way the data is transmitted. Information can not be distributed between different groups.

\vspace{2mm}

We mention here some other motivation to introduce these models: it is often mentioned in the group testing literature that an advantage of testing pools together is that it increases privacy. However, systematical research on this property has only started recently, see e.g. \cite{AFBC2008,CCG2016,EGH2013}. These papers focus on cryptographic versions of the problem. Here we deal with a simple combinatorial version, where privacy only means that an unauthorized participant cannot completely detect the defective element(s). In \cite{GV2017} the authors considered models with one defective element. The main aim of this article is to continue these investigations with more defectives.

\subsection*{Simple combinatorial models with $d$ defectives}

About Questioner's strategy we remark that - as he should find all the defectives - the asked queries should form a \textit{d-separating family} (see the next section for a definition) in the non-adaptive case, so for the minimum number of tests the known lower bound is $\Omega(\frac{d^{2}}{\log d} \log n)$, while the best upper bound construction yields $O(d^{2} \log n)$. It is one of the major open problems in the theory of combinatorial group testing models to close the gap between the previous upper and lower bound.

In the adaptive case there is a multiplicative constant factor between the information theoretic lower bound and the best existing algorithm. The known best lower bound is $d \log \frac{n}{d}$, while the upper bound is $O(d \log n)$.





\subsection*{Structure of the paper} We organize the paper as follows: in Section 2 we introduce some properties and related results about families of sets, that we will need later. In Section 3 we introduce the non-adaptive models that we investigate, while in Section 4 we prove the main results. In Section 5 we look at the adaptive scenario, and we finish the article with remarks and open questions in Section 6.

\vspace{3mm}

We also mention that in this article we use standard asymptotic notation.

\section{Finite set theory background}

Our topic is connected to several areas of finite set theory. In this section we introduce some notions on families of subsets and known results about them, that we will use during the proofs.

In this article we use the notation of $2^{[n]}$ for the power set of $[n]$ and for any $\mathcal{F} \subset 2^{[n]}$, $a\in [n]$ we use  $\cF_a:=\{F\in \cF: a\in F\}$. The \textit{complement} of a family $\cF\subset 2^{n}$ is $\overline{\cF}:=\{[n] \setminus F: F \in \cF\}$, while the \textit{dual} of a family $\cF\subset 2^{n}$ is $\cF':=\{\cF_a: a\in [n]\}$. It is defined on the underlying set $\cF$ and has cardinality at most $n$. For a family $\mathcal{F} \subset 2^{[n]}$ and $d \ge 1$ let $\mathcal{F}^d:=\{\cup_{i=1}^dF_i: F_i\in \mathcal{F}, \ F_i\neq F_j\, \ \textrm{for}\, i\neq j\}$.

\vspace{2mm}

Now we introduce some notions about families of subsets of $[n]$.

\begin{defn} We say that $\cF \subset 2^{[n]}$ is: 

\vspace{2mm}

$\bullet_1$ \textbf{intersection closed} if $F,G\in \cF$ implies $F\cap G\in\cF$.

\vspace{2mm}

$\bullet_2$ \textbf{Sperner} if there are no two different $F_1,F_2 \in \mathcal{F}$ with $F_1 \subset F_2$.

\vspace{2mm}

$\bullet_3$ \textbf{cancellative} if for any three $F_1,F_2,F_3 \in \mathcal{F}$ we have $$F_1 \cup F_2 = F_1 \cup F_3 \Rightarrow F_2=F_3.$$

$\bullet_4$ \textbf{intersection cancellative} if for any three $F_1,F_2,F_3 \in \mathcal{F}$ 
we have $$F_1 \cap F_2 = F_1 \cap F_3 \Rightarrow F_2=F_3.$$

$\bullet_5$ \textbf{d-separating} for some $1 \le d \le n-1$ positive 
integer, if for any two different 

\hspace{4mm}$X_1,X_2 \subset [n]$ with $|X_1|=|X_2|=d$ there is $F \in \mathcal{F}$ with: $$ F \cap X_1 \neq \emptyset \textrm{ and } F \cap X_2 = \emptyset, \textrm{ or }$$ $$ F \cap X_2 \neq \emptyset \textrm{ and } F \cap X_1 = \emptyset.$$  


$\bullet_6$ \textbf{d-union-free} for some $d \ge 1$ if for different $F_1, \dots, F_d \in \cF$ and different

\hspace{5mm} $G_1, \dots, G_d \in \cF$ $$\bigcup^{d}_{i=1} F_i = \bigcup^{d}_{i=1}G_i$$

\hspace{4mm}implies $\{F_1, \dots, F_d \}=\{G_1, \dots, G_d\}$.

$\bullet_7$ \textbf{d-cover-free} for some $d \ge 1$ positive integer if there are no $(d+1)$ 

\hspace{5mm}different $F_1,F_2,...,F_{d+1} \in \mathcal{F}$ with $$F_{d+1} \subset \bigcup_{i=1}^{d}F_i.$$

$\bullet_8$ \textbf{(r,d)-cover-free} for some $rd \ge 1$ positive integers if there are no $(d+r)$ 
different 

\hspace{4mm}$F_1,F_2,...,F_{d+r} \in \mathcal{F}$ with $$\bigcap_{i=d+1}^{d+r}F_{d+1} \subset \bigcup_{i=1}^{d}F_i.$$

\end{defn}

Before defining the last notion, we need some introduction. We will generalize a graph property, so it is more comfortable to use the word hypergraph instead of family of subsets of $[n]$ (where $\cF$ is the set of the hyperedges and $[n]$ is the set of vertices of the hypergraph). There are several ways to define cycles in hypergraphs. Here we use one due to Berge \cite{B1989}. A \textit{Berge-cycle} in a hypergraph of length $k$ (a Berge-$C_k$) consists of $k$ different hyperedges $E_1,\dots, E_k$ and $k$ different vertices $x_1,\dots x_k$ such that $E_i$ contains $x_i$ and $x_{i+1}$ for $1 \le i\le k$ (modulo $k$, so $E_k$ contains $x_k$ and $x_1$). Note that for a $2$-uniform hypergraph (that is a graph) this notion is the same as the 'usual' cycle in a graph. The \textit{Berge-girth} (that we call just \textit{girth} in this article) of a hypergraph $\cH$ is the smallest length of a cycle in $\cH$ (that is $\infty$ if there is no cycle in $\cH$). A hypergraph is $d$-regular if every vertex is contained in exactly $d$ hyperedges, $r$-uniform if every hyperedge has size $r$ and linear if any two hyperedges intersect in at most one vertex.

\subsubsection*{Some known results about these notions that we will use later}

\noindent $\bullet$ The notion cancellative was first introduced by Frankl and F\"uredi in \cite{FF84}.







\begin{fact}\label{intcan}

$\mathcal{F} \subset 2^{[n]}$ is intersection cancellative if and only if $\overline{\mathcal{F}}$ is cancellative.

\end{fact}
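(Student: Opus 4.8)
The plan is to reduce the whole statement to De Morgan's laws. The crucial structural observation is that complementation $\sigma\colon F \mapsto [n]\setminus F$ is a bijection from $\cF$ onto $\overline{\cF}$ (in fact an involution on $2^{[n]}$), so that every $G \in \overline{\cF}$ is written uniquely as $G = [n]\setminus F$ with $F \in \cF$. Because of this, quantifying ``over any three members of $\overline{\cF}$'' is the same as quantifying ``over any three members of $\cF$'', with $G_i = [n]\setminus F_i$. This is what will let me match the cancellative property of $\overline{\cF}$ against the intersection cancellative property of $\cF$.

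Concretely, I would fix $F_1,F_2,F_3 \in \cF$ and set $G_i := [n]\setminus F_i \in \overline{\cF}$. By De Morgan,
\[
G_1 \cup G_2 = [n]\setminus (F_1 \cap F_2), \qquad G_1 \cup G_3 = [n]\setminus (F_1 \cap F_3).
\]
Since $\sigma$ is injective, taking complements shows that $G_1 \cup G_2 = G_1 \cup G_3$ holds if and only if $F_1 \cap F_2 = F_1 \cap F_3$; likewise $G_2 = G_3$ if and only if $F_2 = F_3$. Therefore the implication $G_1 \cup G_2 = G_1 \cup G_3 \Rightarrow G_2 = G_3$ is equivalent to the implication $F_1 \cap F_2 = F_1 \cap F_3 \Rightarrow F_2 = F_3$. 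Quantifying over all triples and using that $\sigma$ is a bijection between $\cF$ and $\overline{\cF}$, the former holding for every triple in $\overline{\cF}$ (that is, $\overline{\cF}$ cancellative) is equivalent to the latter holding for every triple in $\cF$ (that is, $\cF$ intersection cancellative). Since $\sigma$ is an involution, this equivalence delivers both directions simultaneously, so no separate converse is needed.

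There is essentially no hard step here: the argument is a routine translation through complementation. The only point deserving a moment's care is the bookkeeping of the quantifiers — one must confirm that ranging over triples of members of $\overline{\cF}$ corresponds exactly, under $\sigma$, to ranging over triples of members of $\cF$ — but this is immediate from $\sigma$ being a bijection, and it is preserved regardless of whether the three members are required to be distinct.
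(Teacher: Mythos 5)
Your proof is correct: the paper states this as a Fact without giving any proof, and your argument via De Morgan's laws together with the observation that complementation is a bijection between $\cF$ and $\overline{\cF}$ is exactly the routine verification the authors omit. The one point worth flagging — that distinctness of triples is preserved under the bijection — you handle explicitly, so nothing is missing.
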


\vspace{2mm}

\noindent $\bullet$ The notion of separating family in the context of combinatorial search theory was introduced and first studied by R\'enyi in \cite{R1961}. The following fact is rather trivial, so we omit its proof.

\begin{fact}

Suppose $\mathcal{F}_n \subset 2^{[n]}$ is a minimal separating family. Then we have:
$$|\mathcal{F}_n| \le \lceil \log_2 n \rceil.$$

\end{fact}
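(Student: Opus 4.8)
The plan is to prove the bound by exhibiting a concrete separating family of size $\lceil \log_2 n \rceil$; since $\mathcal{F}_n$ is a minimal (i.e.\ smallest) separating family, its cardinality cannot exceed that of any particular separating family we construct. So the whole task reduces to building one separating family that is small enough.

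First I would set $k := \lceil \log_2 n \rceil$, so that $2^k \ge n$. This inequality is exactly what lets me fix an injection $c\colon [n] \to \{0,1\}^k$, assigning to each element $i \in [n]$ a distinct binary string $c(i)$ of length $k$. Intuitively, $c(i)$ will serve as the ``address'' of $i$, and the family will read off the bits of these addresses one coordinate at a time.

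Next I would define, for each coordinate $j \in \{1,\dots,k\}$, the set
$$F_j := \{\, i \in [n] : \text{the } j\text{-th bit of } c(i) \text{ equals } 1 \,\},$$
and take $\mathcal{F} := \{F_1,\dots,F_k\}$. To verify that $\mathcal{F}$ is separating, take any two distinct elements $i,i' \in [n]$. Since $c$ is injective, $c(i)$ and $c(i')$ differ in some coordinate $j$, which means exactly one of $i,i'$ lies in $F_j$; hence $F_j$ separates them. Thus $\mathcal{F}$ is a separating family with at most $k$ members (at most, rather than exactly, since distinct coordinates could in principle yield the same set, which only helps).

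Finally, because $\mathcal{F}_n$ is a smallest separating family on $[n]$ and $\mathcal{F}$ is a separating family of size at most $k$, we conclude $|\mathcal{F}_n| \le |\mathcal{F}| \le k = \lceil \log_2 n \rceil$. There is essentially no serious obstacle here: the only point requiring care is the existence of the injection $c$, which is guaranteed precisely by the choice $k = \lceil \log_2 n \rceil$, and the observation that the separating condition for pairs of singletons is exactly the statement that distinct codewords differ somewhere. This matches the author's remark that the fact is trivial; the binary-encoding construction is the standard and natural argument.
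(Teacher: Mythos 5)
Your proof is correct: the binary-encoding construction yields a separating family of size $\lceil \log_2 n \rceil$, and since $\mathcal{F}_n$ has minimum cardinality among separating families the bound follows. The paper explicitly omits the proof as ``rather trivial,'' and your argument is exactly the standard one the authors have in mind; the only interpretive point you handle correctly is reading ``minimal'' as ``of minimum cardinality'' rather than ``inclusion-minimal'' (under the latter reading the statement would in fact be false, e.g.\ $\{\{1\},\{2\},\{3\}\}$ on $[4]$).
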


\begin{fact}\label{dsepdual} $\mathcal{F} \subset 2^{[n]}$ finds $d$ defectives if and only if $\mathcal{F}$ is $d$-separating. The dual of a $d$-separating family is $d$-union-free.

\end{fact}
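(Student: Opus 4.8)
The plan is to treat the two assertions separately, since the first is a direct unwinding of the meaning of ``finding $d$ defectives'' while the second requires reading off the combinatorial content of a union in the dual family.

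For the first assertion, recall that in the non-adaptive model the answer to a query $F \in \cF$, when the unknown defective set is some $X \subseteq [n]$ with $|X| = d$, is exactly the bit recording whether $F \cap X \neq \emptyset$. Hence the full vector of answers produced by $\cF$ is the indicator, over $F \in \cF$, of the ``yes-set'' $\{F \in \cF : F \cap X \neq \emptyset\}$, and this vector depends only on $X$. To say that $\cF$ finds $d$ defectives is precisely to say that this assignment is injective on the $d$-element subsets of $[n]$: two distinct candidate defective sets must yield different answer vectors. Spelling out what it means for the answer vectors of two distinct $d$-sets $X_1, X_2$ to differ, I get that some $F \in \cF$ meets exactly one of $X_1, X_2$, which is verbatim the $d$-separating condition $\bullet_5$. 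So the two properties coincide and this step is pure translation with no real obstacle.

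For the second assertion I would argue by contraposition, the key observation being that for any $Y \subseteq [n]$ one has $\bigcup_{a \in Y} \cF_a = \{F \in \cF : F \cap Y \neq \emptyset\}$; that is, the union in the dual family $\cF'$ over the members indexed by the elements of $Y$ is exactly the yes-set of $Y$ from the first part. Now suppose $\cF'$ fails to be $d$-union-free, so there are distinct members $\cF_{a_1}, \dots, \cF_{a_d}$ and distinct members $\cF_{b_1}, \dots, \cF_{b_d}$ of $\cF'$ with $\bigcup_i \cF_{a_i} = \bigcup_j \cF_{b_j}$ but $\{\cF_{a_1}, \dots, \cF_{a_d}\} \neq \{\cF_{b_1}, \dots, \cF_{b_d}\}$. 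Choosing one representative index per distinct dual member, the indices may be taken so that $a_1, \dots, a_d$ are distinct and $b_1, \dots, b_d$ are distinct, and I set $X_1 = \{a_1, \dots, a_d\}$ and $X_2 = \{b_1, \dots, b_d\}$, both of size exactly $d$. The assumed inequality of the two families of dual members forces $X_1 \neq X_2$, while the key observation turns the equality $\bigcup_i \cF_{a_i} = \bigcup_j \cF_{b_j}$ into the assertion that the yes-sets of $X_1$ and $X_2$ agree, i.e. no $F \in \cF$ separates $X_1$ from $X_2$. This contradicts $d$-separation, so the dual must be $d$-union-free.

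The only point needing care is the bookkeeping around representatives, since a priori several elements of $[n]$ may index the same member of $\cF'$. This is harmless: because each index $a$ determines a single set $\cF_a$, selecting one index per distinct member automatically produces distinct indices $a_1,\dots,a_d$ (hence a genuine $d$-subset $X_1$), and equality of the index sets $X_1 = X_2$ is clearly equivalent to equality of the corresponding sets of dual members. With this observation in hand the correspondence between $d$-subsets of $[n]$ and $d$-subsets of $\cF'$ is clean, and no ambiguity remains.
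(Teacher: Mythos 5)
Your proof is correct. The paper states Fact~\ref{dsepdual} without proof, and your argument is exactly the intended one: the answer vector of a defective set $X$ is the indicator of $\{F\in\cF: F\cap X\neq\emptyset\}=\bigcup_{a\in X}\cF_a$, so injectivity of this map on $d$-sets is the $d$-separating condition, and a violation of $d$-union-freeness in the dual transfers (via distinctness of representative indices, which you handle correctly) to two $d$-sets with identical answer vectors.
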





\vspace{3mm}

\noindent
$\bullet$ The notion of $d$-union-free families was introduced by Hwang and T. S\'os in \cite{HS1987} under the name of \textit{$d$-Sidon} families. They proved the following:

\begin{thm}\label{dunionfree} (Hwang, T. S\'os, \cite{HS1987}, Theorem 3) There exists a $d$-union-free family \newline $\cF_n \subset 2^{[n]}$ with:  $$\frac{1}{2}(1+\frac{1}{(4d)^{2}})^{n} \le |\cF_n|.$$ 

\end{thm}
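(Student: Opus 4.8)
The plan is to use the probabilistic deletion (alteration) method to exhibit such a family, rather than an explicit algebraic construction. Fix a probability parameter $p\in(0,1)$ to be optimized later, and a target size $m$. I would generate a random family $\cF_0=\{A_1,\dots,A_m\}$ by selecting each $A_j\subset[n]$ independently, placing each element of $[n]$ into $A_j$ independently with probability $p$. Then I would bound the expected number of configurations witnessing a failure of the $d$-union-free property, delete one set from each such configuration, and show that for suitable $m$ and $p$ the surviving subfamily is $d$-union-free and still has size at least $\tfrac12(1+\tfrac{1}{(4d)^2})^n$.

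The key computation is the probability that a single fixed witness to non-$d$-union-freeness occurs. A witness consists of distinct $F_1,\dots,F_d$ and distinct $G_1,\dots,G_d$ from $\cF_0$ with $\{F_i\}\neq\{G_i\}$ as families yet $\bigcup_{i=1}^d F_i=\bigcup_{i=1}^d G_i$. The crucial feature is that the union-equality constraint factorizes over the ground set: after conditioning on which sets coincide, the event is a product over the $n$ coordinates of independent per-coordinate consistency events. Consider first the minimal witness, where the two $d$-tuples share $d-1$ sets, with common union $S$, and differ in a single pair $B\neq C$; here the constraint reduces to $B\setminus S=C\setminus S$. For a fixed coordinate $x$, either $x\in S$ (probability $1-(1-p)^{d-1}$, imposing no constraint) or $x\notin S$, in which case we need $x\in B\Leftrightarrow x\in C$ (probability $p^2+(1-p)^2$). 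Multiplying and simplifying, the per-coordinate consistency probability is
\[
q \;=\; 1-2p(1-p)^{d},
\]
so the minimal witness occurs with probability $q^n$. Maximizing $2p(1-p)^d$ over $p$ gives the optimum at $p=1/(d+1)$, and a short estimate then shows $q^{-1/d}\ge 1+\tfrac{1}{(4d)^2}$, which is exactly the base we need.

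For the deletion step I would bound the expected number of minimal witnesses by $O(m^{d+1})\cdot q^n$ (choosing the $d-1$ shared sets and the differing pair $B,C$) and choose $m$ so that this is at most $m/2$; deleting one set per witness then leaves a family of size $\ge m/2$, and taking $m$ of order $q^{-n/d}=(q^{-1/d})^n$ yields the claimed bound. The main obstacle is that this analysis must cover not only the minimal witness but every overlap pattern: if the two $d$-tuples share exactly $k$ sets, one obtains an analogous per-coordinate probability $q_k$ and a count of order $m^{2d-k}$. I must verify that $\sum_{k=0}^{d-1} m^{2d-k}q_k^n$ stays below $m/2$, i.e.\ that the trade-off between fewer constraints (larger $q_k$) and more choices (larger exponent $2d-k$) is always dominated by the $k=d-1$ term. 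Confirming that $k=d-1$ is the binding case, and carrying out the resulting optimization of $p$ with the explicit constants, is the technical heart of the argument; the remainder is routine bookkeeping.
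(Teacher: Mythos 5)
The paper does not prove this statement at all: Theorem~\ref{dunionfree} is quoted verbatim from Hwang and T.~S\'os \cite{HS1987} as a black box, so there is no internal proof to compare yours against. Judged on its own terms, your alteration-method argument is a legitimate and essentially standard route to a bound of this shape, and the key computations are right. The per-coordinate probability for the minimal witness is indeed $q=1-(1-p)^{d-1}\cdot 2p(1-p)=1-2p(1-p)^d$; at $p=1/(d+1)$ one gets $2p(1-p)^d\ge \tfrac{2}{e(d+1)}$, hence $q^{-1/d}\ge 1+\tfrac{2}{ed(d+1)}$, which exceeds $1+\tfrac{1}{16d^2}$ with ample room to absorb the constant losses from the deletion step and the union bound. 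Your flagged worry about the other overlap patterns does resolve favourably: for overlap $k$ (so $j=d-k$ differing sets on each side) one finds $\ln(1/q_k)\approx 2(1-p)^d\bigl(1-(1-p)^j\bigr)$, and the required inequality $\ln(1/q_k)\ge \tfrac{2d-k-1}{d}\ln(1/q_{d-1})$ reduces to $1-(1-p)^j\ge \bigl(1+\tfrac{j-1}{d}\bigr)p$, which holds for all $2\le j\le d$ at $p=1/(d+1)$ (the left side grows essentially linearly in $j$ while the right side stays below $2p$). So $k=d-1$ is indeed the binding case.

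Two small items you did not mention but should fold into the "routine bookkeeping": (i) the randomly drawn $A_j$ may coincide as sets, so collisions must be counted among the bad events and expurgated (the expected number $m^2\bigl(p^2+(1-p)^2\bigr)^n$ is easily dominated by $m/2$ for your choice of $m$); and (ii) the asymptotic analysis only gives the stated inequality for $n$ large, so small $n$ needs a separate (trivial) remark. Neither affects the soundness of the approach. Note also that your method actually yields a better base, roughly $1+\tfrac{2}{ed^2}$, than the $1+\tfrac{1}{16d^2}$ claimed, which is consistent with the original Hwang--S\'os argument being lossier; if you intend this as a replacement proof rather than a citation, you could state the stronger bound.
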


\vspace{3mm}
\noindent
$\bullet$ The notion of $d$-cover-free families was introduced by Kautz and Singleton in \cite{KS1964}. Note that a $d$-cover-free family is also $d$-union-free. They proved the following lower bound.

\begin{thm}\label{rcoverfreelow} (Kautz, Singleton, \cite{KS1964})
There exists  an $d$-cover-free family $\mathcal{F}_n \subset 2^{[n]}$ with:
$$ \Omega(\frac{1}{d^2})=\frac{ \log_2 |\mathcal{F}_n|}{n}.$$

\end{thm}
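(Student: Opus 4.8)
The plan is to establish this lower bound by a random construction rather than by an explicit algebraic one (such as the concatenated Reed--Solomon codes of Kautz and Singleton), since the probabilistic argument yields the stated rate $\Omega(1/d^2)$ in a self-contained way. Fix a probability parameter $p\in(0,1)$, to be optimized at the end as $p=1/(d+1)$, and an integer $t$. I would generate a random family $\mathcal{F}=\{F_1,\dots,F_t\}$ by placing each element of $[n]$ into each $F_j$ independently with probability $p$. The goal is to show that for $t$ as large as $2^{cn/d^2}$ this family is $d$-cover-free with positive probability, after deleting a negligible number of sets.

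The first step is to estimate the probability of a single forbidden configuration. By the definition of $d$-cover-free (item $\bullet_7$), a violation is a choice of a ``covered'' set $F_{j_0}$ together with $d$ distinct ``covering'' sets $F_j$, $j\in S$, such that $F_{j_0}\subseteq\bigcup_{j\in S}F_j$. For a fixed element $x\in[n]$ the event that $x$ witnesses non-containment, i.e.\ $x\in F_{j_0}$ but $x\notin F_j$ for all $j\in S$, has probability $p(1-p)^{d}$, and these events are independent across the $n$ elements. Hence the probability that $F_{j_0}\subseteq\bigcup_{j\in S}F_j$ equals $(1-p(1-p)^{d})^{n}\le \exp(-np(1-p)^{d})$.

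The second step is a union bound. The number of forbidden configurations is at most $t\binom{t}{d}\le t^{d+1}$, so the probability that the random family contains some violation is at most $t^{d+1}\exp(-np(1-p)^{d})$. Choosing $p=1/(d+1)$ makes $p(1-p)^{d}\ge c_0/d$ for an absolute constant $c_0>0$, since $(1-1/(d+1))^{d}$ decreases to $1/e$ and is therefore bounded below by $1/3$ for all $d\ge1$. The failure probability is then at most $\exp((d+1)\ln t-c_0 n/d)$, and taking $\log_2 t=\lfloor c_1 n/d^2\rfloor$ with $c_1$ small enough drives this quantity below $1$. The choice $p=1/(d+1)$ and the resulting balance between the gain $e^{-c_0 n/d}$ from one configuration and the $t^{d+1}$ configurations is exactly what pins the achievable rate at $1/d^2$.

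With positive probability the family then has no forbidden configuration, hence is $d$-cover-free. I expect the only delicate point to be that the definition requires the $d+1$ sets to be \emph{distinct}, whereas the random sets may coincide; I would dispose of this either by a second union bound (the probability that two fixed sets coincide is $(p^2+(1-p)^2)^n$, exponentially small, so all $t$ sets are distinct with probability tending to $1$) or by simply deleting one member of each coincident pair, which removes only a vanishing fraction. Either way one obtains a $d$-cover-free family of $t=2^{\Omega(n/d^2)}$ distinct subsets of $[n]$, giving $\log_2|\mathcal{F}_n|/n=\Omega(1/d^2)$ as claimed. I would remark that a more careful accounting in the union bound improves the bound to $\Omega(\log d/d^2)$, but the weaker estimate above already suffices for the statement.
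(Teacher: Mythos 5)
The paper offers no proof of this statement: it is quoted verbatim from Kautz and Singleton \cite{KS1964} as a known existence bound, so there is nothing internal to compare your argument against. Your probabilistic proof is correct and self-contained. The per-element ``witness'' probability $p(1-p)^d$, the independence of these events across the $n$ elements, the resulting bound $(1-p(1-p)^d)^n\le\exp(-np(1-p)^d)$ for a single forbidden configuration, the union bound over at most $t^{d+1}$ configurations, and the choice $p=1/(d+1)$ giving $p(1-p)^d\ge c_0/d$ all check out, and you correctly flag and dispose of the distinctness issue (note that your event ``no forbidden configuration'' already rules out empty sets, which would otherwise be trivially covered). What this buys over the original route --- explicit superimposed codes built from $q$-ary codes --- is brevity and a clean derivation of the $\Omega(1/d^2)$ rate; what it loses is constructiveness. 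One caveat: your closing remark that a more careful accounting improves the existence rate to $\Omega(\log d/d^2)$ is an overclaim. Closing the $\log d$ gap between the random-coding lower bound $\Omega(1/d^2)$ and the D'yachkov--Rykov upper bound of Theorem \ref{rcoverfree} is a well-known open problem, and no refinement of this union bound achieves it. That aside is not used anywhere, so it does not affect the validity of your proof of the stated theorem.
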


\noindent
D'yachkov and Rykov proved the following upper bound on the size of $d$-cover-free families:

\begin{thm}\label{rcoverfree} (D'yachkov, Rykov, \cite{DR1982})
Suppose that $\mathcal{F}_n \subset 2^{[n]}$ is an $d$-cover-free family. Then we have:
$$\frac{ \log_2 |\mathcal{F}_n|}{n} \le \frac{2 \log_2 d}{d^2} (1 + o(1)).$$

\end{thm}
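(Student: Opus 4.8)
The plan is to work with the $n\times N$ incidence matrix $M$ of $\cF_n$ (rows indexed by $[n]$, columns by the members, $N=|\cF_n|$) and to translate the $d$-cover-free property into the statement that every column has a \emph{private coordinate} against any $d$ of the others: for each member $F$ and each set $\{G_1,\dots,G_d\}$ of $d$ distinct members different from $F$, there is a row $x$ with $M(x,F)=1$ and $M(x,G_i)=0$ for all $i$. The objective is to leverage this to show $\log_2 N \le n\cdot\frac{2\log_2 d}{d^2}(1+o(1))$.

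First I would extract a necessary \emph{averaging} condition. Writing $\deg(x)=|\cF_x|$ for the row degrees, I would choose $d$ of the remaining members uniformly at random and estimate the expected number of rows of $F$ that survive (equal to $1$ in column $F$ and to $0$ in all $d$ chosen columns); since each chosen column contains $x$ with probability $\approx \deg(x)/N$, this expectation is close to $\sum_{x\in F}(1-\deg(x)/N)^{d}$. Because $d$-cover-freeness makes even the \emph{minimum} over all choices at least $1$, the expectation is at least $1$, and summing over all $F\in\cF_n$ while exchanging the order of summation gives the clean constraint $\sum_{x\in[n]}\deg(x)(1-\deg(x)/N)^{d}\ge N$. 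Optimizing $y\mapsto y(1-y)^d$ (maximized near $y=1/(d+1)$) shows the dominant contribution comes from rows of degree roughly $N/d$, which is the scale at which the relevant threshold will be set.

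The difficulty — and the reason this is a cited theorem rather than an exercise — is that this first-moment inequality only constrains the degree sequence and does not see $\log N$, so it must be fused with a count of how many \emph{distinct} columns can coexist. For this I would fix a degree threshold $s\approx \tfrac{N\ln d}{d}$, argue that a coordinate of degree above $s$ is covered by $d$ random members with overwhelming probability and hence is almost never the private coordinate that keeps a column uncovered, so that the private coordinates are effectively forced to lie among the low-degree rows; an entropy estimate on the resulting low-degree incidence structure, optimized over $s$, then bounds $N$. The hard part is carrying out this final counting sharply enough to produce the constant $2$ rather than merely the order $O(\log d/d^2)$: this is precisely the delicate information-theoretic optimization of D'yachkov and Rykov, in which the factor $2\log_2 d/d^2$ emerges as the extremum of the underlying entropy expression.
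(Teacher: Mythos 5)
The first thing to note is that the paper itself offers no proof of this statement: Theorem~\ref{rcoverfree} is quoted from D'yachkov and Rykov \cite{DR1982} as a black-box bound, so there is no internal argument to compare yours against. Judged on its own terms, your proposal has a genuine gap: everything that is actually responsible for the stated bound is deferred. The first half (the incidence matrix and the averaged inequality $\sum_{x}\deg(x)\,(1-\deg(x)/N)^{d}\ge N$) is a correct but very weak necessary condition --- if, say, every row has degree about $N/(d+1)$, it only forces $n\gtrsim e\,d$ and contains no dependence on $\log N$ at all --- and you acknowledge this yourself. The second half, where the theorem would have to be proved, is a gesture (``an entropy estimate on the resulting low-degree incidence structure, optimized over $s$, then bounds $N$'') followed by the admission that the hard part ``is precisely the delicate information-theoretic optimization of D'yachkov and Rykov.'' That optimization \emph{is} the theorem; invoking it by name rather than carrying it out leaves the proof unwritten.

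I would also caution that the threshold-plus-entropy route you sketch is not how the known proofs proceed, and there is no indication it can be pushed to the stated constant. The D'yachkov--Rykov argument is a recursion on $d$: one selects a minimum-weight column, shows that the restrictions of the remaining columns to its support inherit a cover-free-type property with a smaller parameter, derives a recursive lower bound on the minimum length $n(d,N)$, and obtains the constant $2\log_2 d/d^{2}$ by optimizing that recursion --- the degree-sequence first moment plays no role. If you want something provable at the level of effort you are investing, the order of magnitude $\log_2|\mathcal{F}_n|/n=O(\log_2 d/d^{2})$ can be reached by more elementary double-counting arguments (in the style of F\"uredi and Ruszink\'o), but with a worse constant than $2$; the specific constant in the statement requires reproducing the cited recursion in full, so as written the proposal does not establish the theorem.
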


\vspace{3mm}

\noindent
$\bullet$ The notion of $(r,d)$-cover-free families were introduced by 
D'yachkov, Macula, Torney and Vilenkin in \cite{DVTM2002}. They showed that a result by Stinson, Wei and Zhu \cite{SWZ2000} implies the following:

\begin{thm}\label{rdcoverfree}
If the dual of $\mathcal{F}_n \subset 2^{[n]}$ is $(r,d)$-cover-free, then we have: $$|\mathcal{F}_n|=\Omega_{d,r}(\log_2n).$$

\end{thm}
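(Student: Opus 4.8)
Set $m:=|\cF_n|$ and regard the dual $\cF'=\{\cF_a : a\in[n]\}$ as a system of $n$ blocks, indexed by $[n]$, sitting inside the $m$-element ground set $\cF_n$. Our target is $m\ge c_{d,r}\log_2 n$ for some constant $c_{d,r}>0$. The plan has two steps: first show that these $n$ blocks are genuinely pairwise distinct, and then feed this into a lower bound for $(r,d)$-cover-free families. The distinctness step is what makes the statement non-vacuous, since the bound is phrased in terms of $n$ (the number of blocks) rather than in terms of the possibly smaller number of \emph{distinct} sets $\cF_a$.

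For distinctness, suppose $\cF_a=\cF_b$ for some $a\neq b$, and assume $n\ge d+r$ (for $n<d+r$ the claimed bound holds trivially once $c_{d,r}$ is chosen small). Pick $d+r$ distinct indices with $a_1=a$, $a_{d+1}=b$, and the remaining indices arbitrary; then
\[
\bigcap_{i=d+1}^{d+r}\cF_{a_i}\ \subseteq\ \cF_{a_{d+1}}=\cF_b=\cF_a=\cF_{a_1}\ \subseteq\ \bigcup_{i=1}^{d}\cF_{a_i},
\]
which is exactly a configuration forbidden by $(r,d)$-cover-freeness. Hence all $n$ blocks $\cF_a$ are distinct. (Equivalently, one notes that $(r,d)$-cover-freeness implies its $(1,1)$ specialization, i.e.\ the Sperner/antichain property, which already forces distinct blocks.)

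Now I would invoke the quantitative estimate of Stinson, Wei and Zhu \cite{SWZ2000}: an $(r,d)$-cover-free family on a $t$-element ground set has at most $2^{C_{d,r}\,t}$ blocks, equivalently a family with $N$ distinct blocks needs $t\ge c_{d,r}\log_2 N$ points. Applying this to $\cF'$, whose $N=n$ blocks (distinct, by the previous step) live on the $m$-element ground set $\cF_n$, yields $m\ge c_{d,r}\log_2 n$, that is $|\cF_n|=\Omega_{d,r}(\log_2 n)$.

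The one place to be careful is the dualization together with the distinctness step, as explained above; this is the genuine content, and the rest is soft. Indeed, once distinctness is secured even the trivial count $n\le 2^m$ already gives $m\ge\log_2 n$, so the role of the Stinson--Wei--Zhu estimate is only to certify that the (in general much stronger) lower bound for $(r,d)$-cover-free families degrades, in the worst case, to the stated $\Omega_{d,r}(\log_2 n)$. I expect no serious obstacle beyond bookkeeping the constant $c_{d,r}$ and dispatching the small-$n$ corner cases.
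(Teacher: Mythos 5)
The paper itself offers no proof of this statement---it is quoted from D'yachkov--Macula--Torney--Vilenkin as a consequence of the Stinson--Wei--Zhu bound---so your plan (dualize, then apply the upper bound on the number of blocks of an $(r,d)$-cover-free family living on an $m$-element ground set) is indeed the intended reduction, and you correctly isolate where the real content lies: once the $n$ blocks $\cF_a$ are known to be pairwise distinct, even the trivial count $n\le 2^{|\cF_n|}$ already yields the claimed $\Omega_{d,r}(\log_2 n)$.

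However, your distinctness step fails. The paper's definition of $(r,d)$-cover-free forbids only configurations of $d+r$ \emph{different} sets $F_1,\dots,F_{d+r}$, and the dual is defined as the \emph{set} $\{\cF_a: a\in[n]\}$, explicitly ``of cardinality at most $n$''. If $\cF_a=\cF_b$ with $a\neq b$, your chosen configuration has $F_1=\cF_a=\cF_b=F_{d+1}$, so the sets are not different and no forbidden configuration arises; likewise the parenthetical ``$(1,1)$ specialization is Sperner'' concerns two \emph{different} members and says nothing about repeated blocks. More fundamentally, the hypothesis is a property of the set $\cF'$, which is invariant under collapsing duplicate blocks, whereas the conclusion depends on $n$; so no argument from the hypothesis alone can recover injectivity of $a\mapsto\cF_a$. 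Concretely, $\cF_n=\{[n]\}$ has a single-block dual, which is vacuously $(r,d)$-cover-free, yet $|\cF_n|=1$. The statement implicitly requires (and, in every application in the paper, has) the extra hypothesis that $\cF_n$ separates points of $[n]$---e.g.\ that it is a $d$-separating query family---which is exactly what gives $\cF_a\neq\cF_b$ for $a\neq b$; with that added, your argument closes.
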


\vspace{3mm}

\noindent $\bullet$ Ellis and Linial ~\cite{EL14} studied regular uniform linear hypergraphs with large girth. They mention that a result of Cooper, Frieze, Molloy and Reed \cite{CFMR96} implies that for any $d\ge 2$, $r,g\ge 3$ and sufficiently large $n$, if $r$ divides $n$, then there is an $r$-uniform, $d$-regular, $n$-vertex linear hypergraph with girth at least $g$. Moreover the argument can be adapted to show the same statement in the case $r$ divides $dn$.

\begin{thm}\label{girth} Let  $d\ge 2$, $r,g\ge 3$ and $n$ large enough such that $r$ divides $dn$. Then there exists a linear, $d$-regular, $r$-uniform hypergraph with girth at least $g$ on $n$ vertices.

\end{thm}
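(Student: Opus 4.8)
The plan is to realize the desired hypergraph as a typical outcome of the \emph{configuration (pairing) model} for random $d$-regular $r$-uniform hypergraphs, and to show that with probability bounded away from $0$ this outcome is linear and has girth at least $g$. Concretely, I would attach to each of the $n$ vertices a set of $d$ distinct \emph{half-edges}, producing $dn$ half-edges in total, and then choose a uniformly random partition of these $dn$ half-edges into $dn/r$ blocks of size $r$; this is exactly where the hypothesis $r \mid dn$ is used, since it guarantees that all blocks have the same size $r$. Projecting each block to the set of vertices owning its half-edges yields a $d$-regular, $r$-uniform multi-hypergraph $H$ on $[n]$, in which each vertex lies in exactly $d$ blocks and each block has size $r$ as a multiset. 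It then remains to show that, with positive probability, the projection $H$ is a genuine (simple, linear) hypergraph of girth at least $g$.

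First I would record which local substructures must be excluded. Call a block \emph{degenerate} if it contains two half-edges of the same vertex; forbidding degenerate blocks is precisely what makes every hyperedge a genuine $r$-element set. Next, unwinding the definition of a Berge-cycle, a Berge-$C_2$ is just a pair of hyperedges sharing two vertices, so "no Berge-$C_2$" is exactly linearity; hence demanding girth at least $g$ (with $g \ge 3$) already forces $H$ to be simple and linear. Thus the single event I must establish with positive probability is
$$\{H \text{ has no degenerate block and no Berge-}C_\ell \text{ for any } 2 \le \ell \le g-1\}.$$

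The heart of the argument is a moment computation in the pairing model. For fixed $d$, $r$ and fixed length $\ell$, I would count ordered choices of $\ell$ blocks together with the vertices threading them into a Berge-$C_\ell$, and multiply by the probability that the relevant half-edges are paired into the prescribed blocks; since each additional pairing constraint contributes a factor of order $1/(dn)$ while the number of placements of the structure grows like a matching power of $n$, the expected number of Berge-$C_\ell$'s converges, as $n \to \infty$, to a finite constant $\lambda_\ell = \lambda_\ell(d,r)$, and similarly the expected number of degenerate blocks tends to a finite constant $\lambda_0$. A pure first-moment (Markov) bound is therefore useless, because these expectations do not tend to $0$. The step I expect to be the main obstacle is precisely the way around this: one must upgrade the first-moment estimate to a full Poisson limit, computing the joint factorial moments of the counts of short Berge-cycles (and degenerate blocks) and verifying that they factorize, so that these counts converge jointly to independent Poisson random variables. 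Consequently
$$\Pr\bigl[H\text{ has no degenerate block and girth} \ge g\bigr] \;\longrightarrow\; \exp\Bigl(-\lambda_0 - \sum_{\ell=2}^{g-1} \lambda_\ell\Bigr) \;>\; 0,$$
so this probability is positive for all sufficiently large $n$ and a suitable outcome exists; its projection is $d$-regular and $r$-uniform by construction, and on the chosen event it is linear with girth at least $g$, completing the proof. An alternative to the Poisson limit would be a switching/deletion argument that destroys each short cycle by a local rewiring, but arranging such rewirings to preserve both $d$-regularity and $r$-uniformity makes the bookkeeping the delicate point there as well.
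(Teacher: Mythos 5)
Your proposal is correct in outline, but it is worth saying up front that the paper does not prove Theorem \ref{girth} at all: it is quoted from Ellis and Linial, who in turn attribute it to the work of Cooper, Frieze, Molloy and Reed on random regular uniform hypergraphs. Your argument --- the configuration model with $dn$ half-edges grouped into blocks of size $r$, followed by a joint Poisson limit for the counts of degenerate blocks and short Berge-cycles --- is precisely the standard machinery underlying those cited results, so in spirit you have reconstructed the proof the paper is outsourcing rather than found a different one. Your reductions are all sound: $r\mid dn$ is exactly what makes the pairing model well defined, excluding degenerate blocks is what makes hyperedges genuine $r$-sets, and under the paper's definition of Berge-cycle the absence of a Berge-$C_2$ is equivalent to linearity (and also rules out repeated hyperedges, since $r\ge 3$). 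You also correctly diagnose why a first-moment bound fails --- the expected counts tend to positive constants such as $(d-1)(r-1)/2$ for degenerate blocks --- and that the missing ingredient is the method of (factorial) moments giving asymptotic independence and the limit $\exp(-\lambda_0-\sum_{\ell=2}^{g-1}\lambda_\ell)>0$. That Poisson step is the one piece you assert rather than carry out; it is genuinely nontrivial but entirely standard (it is Bollob\'as's short-cycle computation for random regular graphs, extended to uniform hypergraphs exactly as in the cited reference), so I would not call it a gap so much as the point where a self-contained write-up would need several pages of bookkeeping. The one practical advantage of the paper's route is that citing Ellis--Linial also gives the quantitative refinement used later in Section 5 (that $r$ can be taken as large as roughly $n^{1/6}/d$), which your fixed-$r$, $n\to\infty$ Poisson argument does not immediately yield.
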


\section{Models}



In this section we start our investigations and give a systematic study of models with the extra property that each element knows the answers for those queries that contain it.

In all the models in this section an input set $[n]$ is given, and $d$ of them are defectives ($d \le n$). We are dealing with non-adaptive models, so Questioner needs to construct a family $\cF \subset 2^{[n]}$. A set $F$ correspond to a query of the following type: 'is $x \in F \subset [n]$?'. In each model we assume that knowing all the answers is enough information for Questioner to find the defective elements, i.e. $\cF$ is $d$-separating. Note that this immediately implies a lower bound of $\Omega_d(\log_2 n)$ on the size of the query family in each model. We mention whenever the query family satisfies another property that could improve the factor depending only on $d$, but calculating the factors is outside the scope of this paper.

The main difference between the following models is what we want the elements to find out. Using only the information available to them, i.e. the answers to the queries containing them, we can require that they find out something about the defective elements, or oppositely, that they cannot find out something. 

When we say that an element $x$ \textit{knows the defective elements}, we mean that the query family satisfies the following property: no matter what the defective element is, after the answers $x$ can find out the defective ones, i.e. the subfamily $\cF_x$ is $d$-separating. In the opposite when we say that $x$ \textit{does not know any of the defective elements}, we mean that the query family satisfies the following property: no matter what the defective elements are, after the answers $x$ cannot identify any defective element. Equivalently, for any $D\subset [n]$, $y \in D$ with $|D|=d$ there is a $D'\subset [n]$ with $|D'|=d$, $y\not\in D'$, such that the same members of $\cF_x$ intersect $D$ and $D'$.


Another variant of this problem is when elements can share information among them. It is possible that in some model some element can not find out the defective, however if we pick two elements and they share their information among them, they can find the defective elements. We consider these kind of models.

We also assume that in each model the elements know the setup of the problem, i.e. that $n$ elements are given and exactly $d$ of them are defectives. We use the expression that a family \textit{solves} a model if it satisfies the property that describes the model.
\vspace{3mm}

In each of the following models we first give a property describing  what the elements should know, and then we examine if there is a query set that solves that specific model or state results about the cardinality of such query sets. Then we consider models where we require some information to remain hidden from the elements. Finally we mix these types of properties.

In this section we assume that there are exactly $d\ge 2$ defective elements (and every element knows that). We consider models analogous to the ones introduced in \cite{GV2017}.

\subsection{Model $1_d$}

Probably the most natural model is the following:


\vspace{2mm}

\textbf{Property:} all elements find out if they are defective.

\vspace{2mm}

\noindent
We note that some cryptographic problems concerning this model were investigated in \cite{AFBC2008}, where the authors observed that the dual of a $d$-cover-free family solves this model. Here we show that only such families solve this model.

\begin{thm}\label{model1d} $\cF$ solves Model $1_d$ if and only if its dual is $d$-cover-free.

\end{thm}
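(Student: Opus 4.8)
The plan is to translate the phrase ``all elements find out if they are defective'' into a combinatorial condition on the subfamilies $\cF_x$, and then recognise that condition as exactly the negation of $d$-cover-freeness of the dual $\cF'$. I would prove both implications by contraposition, showing that $\cF$ \emph{fails} to solve Model $1_d$ if and only if $\cF'$ is \emph{not} $d$-cover-free.

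First I would fix an element $x\in[n]$ and analyse what $x$ observes. If $D$ is the (size-$d$) set of defectives, then for each query $F\in\cF_x$ the element $x$ learns the bit ``$F\cap D\neq\emptyset$''. The key observation is that whenever $x$ is defective every $F\in\cF_x$ is positive, so $x$ sees the all-positive pattern on $\cF_x$; and with $x\notin D$ the all-positive pattern occurs precisely when every $F\in\cF_x$ meets $D$. Hence $x$ can always decide whether it is defective if and only if there is no $d$-set $D'\subseteq[n]\setminus\{x\}$ such that every member of $\cF_x$ meets $D'$; otherwise any $d$-set containing $x$ and the set $D'$ produce the same all-positive observation on $\cF_x$ and are therefore indistinguishable for $x$. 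I would record that the ``only if'' uses that $x\in D$ forces the all-positive pattern, while the ``if'' uses that $\{x\}$ can always be completed to a $d$-set, so that the distinction between ``exactly $d$'' and ``at most $d$'' is immaterial here.

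Next I would pass to the dual. Writing $D'=\{a_1,\dots,a_d\}$, the condition ``every $F\in\cF_x$ meets $D'$'' is literally $\cF_x\subseteq\bigcup_{i=1}^d\cF_{a_i}$, i.e.\ the member $\cF_x$ of $\cF'$ is covered by the $d$ members $\cF_{a_1},\dots,\cF_{a_d}$ of $\cF'$ with $x\notin D'$. Thus $\cF$ fails to solve Model $1_d$ exactly when there exist distinct elements $x,a_1,\dots,a_d$ with $\cF_x\subseteq\bigcup_{i=1}^d\cF_{a_i}$, which is precisely a witness that $\cF'$ is not $d$-cover-free. Combining the two directions of this equivalence yields the theorem.

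The one point that needs care --- and the step I expect to be the main (if minor) obstacle --- is matching the ``$d+1$ different members $F_1,\dots,F_{d+1}$'' clause in the definition of $d$-cover-free with the ``$d+1$ distinct elements $x,a_1,\dots,a_d$'' produced above. Passing from a cover-free violation to a model failure is immediate, since $d+1$ distinct members $\cF_{a_0},\dots,\cF_{a_d}$ of $\cF'$ have distinct representatives $a_0,\dots,a_d$. For the reverse direction I must ensure that the sets $\cF_x,\cF_{a_1},\dots,\cF_{a_d}$ are pairwise distinct; this is where I would invoke the standing assumption that $\cF$ is $d$-separating, which forces the map $a\mapsto\cF_a$ to be injective (a query separating $\{x\}\cup S$ from $\{y\}\cup S$ for a common $(d-1)$-set $S$ witnesses $\cF_x\neq\cF_y$). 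Granting this injectivity, the $d+1$ distinct elements yield $d+1$ distinct members of $\cF'$, so the failure of Model $1_d$ indeed contradicts $d$-cover-freeness of $\cF'$, completing the correspondence.
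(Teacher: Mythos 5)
Your proposal is correct and follows essentially the same route as the paper: characterise when an element $x$ can decide its own status by whether $\cF_x$ (equivalently, the paper's $\cH_x$) admits a cover by at most $d$ elements other than $x$, then dualise this into the $d$-cover-free condition. Your extra care about the injectivity of $a\mapsto\cF_a$ (needed to match ``$d+1$ distinct elements'' with ``$d+1$ distinct members of the dual'') is a detail the paper leaves implicit, but it does not change the argument.
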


\noindent
By Theorem \ref{model1d}, Theorem \ref{rcoverfreelow} and Theorem \ref{rcoverfree} we have:

\begin{cor} If $\cF_n \subset 2^{[n]}$ solves Model $1_d$ and has minimum cardinality, then we have:
$$\Omega(\frac{d^{2}}{\log_2 d}\log_2 n)=|\cF_n| = O(d^{2}\log_2 n ).$$

\end{cor}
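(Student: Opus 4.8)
The plan is to derive both bounds purely as consequences of the characterization in Theorem \ref{model1d}, by reading off the extremal results on $d$-cover-free families (Theorems \ref{rcoverfreelow} and \ref{rcoverfree}), while being very careful that passing to the dual $\cF'$ interchanges the roles of ``size of the ground set'' and ``number of members.'' To set up, let $\cF_n \subset 2^{[n]}$ be a minimum-cardinality family solving Model $1_d$ and put $m := |\cF_n|$. By Theorem \ref{model1d} its dual $\cF_n'=\{(\cF_n)_a : a\in[n]\}$ is $d$-cover-free. Any family solving the model is $d$-separating, which forces the fibers $(\cF_n)_a$ to be pairwise distinct: if $(\cF_n)_a=(\cF_n)_b$ for $a\neq b$, then taking any $d$-set $X_1\ni a$, $X_1\not\ni b$ and letting $X_2$ be $X_1$ with $a$ replaced by $b$, every $F\in\cF_n$ meets $X_1$ iff it meets $X_2$, contradicting $d$-separation (here $d\le n-1$). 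Hence $\cF_n'$ is a $d$-cover-free family with exactly $n$ members on a ground set (namely $\cF_n$) of size $m$ --- the opposite assignment of parameters to the one appearing in Theorems \ref{rcoverfreelow} and \ref{rcoverfree}, where the ground set has size ``$n$'' and $|\mathcal{F}_n|$ is the number of members.

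For the lower bound I would apply Theorem \ref{rcoverfree} (D'yachkov--Rykov) to $\cF_n'$, reading its ground-set size as $m$ and its number of members as $n$. This gives $\frac{\log_2 n}{m}\le \frac{2\log_2 d}{d^2}(1+o(1))$, hence $m\ge \frac{d^2}{2\log_2 d}\log_2 n\,(1-o(1))=\Omega\!\left(\frac{d^2}{\log_2 d}\log_2 n\right)$, where the $o(1)$ is taken as $m\to\infty$ (which indeed happens as $n\to\infty$). Since this argument uses only that a solving family is $d$-cover-free in the dual, it applies to every family solving Model $1_d$, and in particular to the minimum one.

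For the upper bound I would run Theorem \ref{rcoverfreelow} (Kautz--Singleton) in the reverse direction: there is a $d$-cover-free family $\cG$ on a ground set of size $m_0$ with $\log_2|\cG|\ge\Omega(m_0/d^2)$, i.e.\ with at least $2^{\Omega(m_0/d^2)}$ members. Choosing $m_0=O(d^2\log_2 n)$ makes $2^{\Omega(m_0/d^2)}\ge n$, and since a subfamily of a $d$-cover-free family is again $d$-cover-free, I may discard members to assume $\cG$ has exactly $n$ members on a ground set of size $m_0$. Now I dualize: $\cG'$ is defined on the underlying set $\cG$ of size $n$ and has at most $m_0$ members, and by involutivity of the dual (using the distinctness discussed above) $(\cG')'=\cG$. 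Setting $\cF:=\cG'$ therefore gives a family on $[n]$ whose dual $\cF'=\cG$ is $d$-cover-free, so by Theorem \ref{model1d} it solves Model $1_d$, with $|\cF|\le m_0=O(d^2\log_2 n)$. Thus the minimum cardinality is at most $O(d^2\log_2 n)$, and together with the previous paragraph this establishes the claimed two-sided bound.

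The only real obstacle here is bookkeeping: the two extremal theorems are stated with ground-set size and number of members in fixed positions, whereas the correspondence $\cF\leftrightarrow\cF'$ swaps exactly these two quantities, so the D'yachkov--Rykov \emph{upper} bound on members becomes a \emph{lower} bound on $m=|\cF_n|$, and the Kautz--Singleton \emph{lower} bound on members becomes the \emph{upper} bound construction. The two secondary points that need a line of care are the involutivity of the dual and the claim that a model-solving family has pairwise distinct fibers $(\cF_n)_a$; the latter is what guarantees $\cF_n'$ has exactly $n$ members (not fewer) and lets the double dual recover the original family in the upper-bound construction.
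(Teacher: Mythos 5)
Your proposal is correct and follows exactly the paper's route: the paper derives this corollary in one line by combining Theorem \ref{model1d} (dual is $d$-cover-free) with Theorems \ref{rcoverfreelow} and \ref{rcoverfree}, which is precisely your argument, only with the dualization bookkeeping (ground-set size and number of members swapping roles, distinctness of fibers) spelled out explicitly rather than left implicit. The one point worth tightening is the appeal to ``the distinctness discussed above'' in the upper bound: there you need distinct fibers of the Kautz--Singleton family $\cG$ (not of a model-solving family), which can be arranged without loss of generality by deleting ground elements with identical fibers, an operation that preserves $d$-cover-freeness and the number of members.
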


\subsection{Model $2_d$}

Another natural model is when the elements should find out everything.

\vspace{2mm}

\textbf{Property:} every element finds all the defectives.

\vspace{2mm}

\noindent
It is obvious that no $\mathcal{F}$ can solve Model $2_d$ if $1<d<n$: a defective element cannot gather any information about the other elements, as it gets only YES answers.

\subsection{Model $2'_d$}

As defective elements cannot gather any information about the other elements, in the next model we only require non-defective elements to find the defective ones.

\vspace{2mm}

\textbf{Property:} every non-defective element finds all the defectives.

\vspace{1mm}

\begin{thm}\label{model2dprime} Suppose $\mathcal{F}_n$ solves Model $2_d'$ and has minimum cardinality. Then we have $$ |\mathcal{F}_n|= \Theta_d (\log_2 n).$$

\end{thm}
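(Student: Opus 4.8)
The plan is to prove the two bounds separately, since $\Theta_d$ packages a matching lower and upper bound. The lower bound $|\mathcal{F}_n| = \Omega_d(\log_2 n)$ is essentially free: any family solving Model $2_d'$ is in particular $d$-separating (a non-defective element exists because $d<n$, and Questioner sees a superset of that element's answers, so he too can reconstruct the defective set), and a $d$-separating family must distinguish all $\binom{n}{d}$ $d$-subsets of $[n]$, whence $|\mathcal{F}_n|\ge \log_2\binom{n}{d} = \Omega_d(\log_2 n)$. So the whole content lies in the matching upper bound: I must exhibit a family that solves Model $2_d'$ using only $O_d(\log_2 n)$ queries.

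First I would translate the model property into a clean combinatorial condition on $\mathcal{F}$. Fix a non-defective element $x$ and the true defective set $D$ with $x\notin D$; then $x$ observes, for each $F\in\mathcal{F}_x$, only whether $F\cap D\neq\emptyset$. For $x$ to reconstruct $D$ two things are needed: (i) some $F\in\mathcal{F}_x$ with $F\cap D=\emptyset$, so that $x$ sees a NO answer and can discard the ``$x$ is defective'' scenarios (which would all yield the all-YES pattern on $\mathcal{F}_x$ and leave $D$ undetermined); and (ii) for every other $d$-set $D'$ with $x\notin D'$, some $F\in\mathcal{F}_x$ separating $D$ and $D'$. Picking $a\in D\setminus D'$, any $F$ with $\{x,a\}\subseteq F$ and $F\cap D'=\emptyset$ separates them, so both (i) and (ii) follow from the single requirement: for every $x$, every $a\neq x$ and every $d$-set $D$ disjoint from $\{x,a\}$ there is $F\in\mathcal{F}$ with $\{x,a\}\subseteq F$ and $F\cap D=\emptyset$. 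Passing to the dual, this says exactly that $\mathcal{F}'$ is $(2,d)$-cover-free, linking the problem to the notions of Section~2.

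The construction I would then give is probabilistic (equivalently, one may invoke the existence of $(2,d)$-cover-free codes of length $O_d(\log_2 n)$). Take $m = C_d\log_2 n$ independent random sets $F_1,\dots,F_m$, where each element of $[n]$ is placed into a given $F_i$ independently with probability $p=\tfrac{2}{d+2}$, the value maximizing $p^2(1-p)^d$. For a fixed triple $(x,a,D)$ as above, one set $F_i$ satisfies $\{x,a\}\subseteq F_i$ and $F_i\cap D=\emptyset$ with probability $q=p^2(1-p)^d=\Theta_d(1)$, so the probability that none of the $m$ sets works is $(1-q)^m$. There are at most $n^{d+2}$ such triples, so by the union bound the probability that some triple fails is at most $n^{d+2}(1-q)^m$, which drops below $1$ once $C_d$ is chosen large enough in terms of $d$ alone. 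Thus a family of size $O_d(\log_2 n)$ meeting the condition, and hence solving Model $2_d'$, exists; together with the first paragraph this yields the claimed $\Theta_d(\log_2 n)$.

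I expect the main obstacle to be the faithful formalization of the second paragraph rather than the estimate in the third. The delicate points are that each element may use \emph{only} the queries containing it (which is why plain $d$-separation of $\mathcal{F}_x$ does not suffice and the pair condition on sets $F\ni x$ arises), and that one must also exclude the ambiguity coming from an element not knowing a priori whether it is itself defective, which is precisely what forces requirement (i). Once the condition is pinned down as the $(2,d)$-cover-free property of the dual, both directions are routine.
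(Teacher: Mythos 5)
Your proof is correct, and its central reduction is the same as the paper's: both arguments rest on the observation that if the dual of $\mathcal{F}$ is $(2,d)$-cover-free, then a non-defective $x$ gets, for every non-defective $a$, a query containing $x$ and $a$ and avoiding the defectives, which lets $x$ rule itself out and separate the true defective set from every competitor. Where you diverge is in how the two quantitative bounds are settled. For the lower bound the paper proves a \emph{necessary} condition --- the dual must be $d$-cover-free, since otherwise some $x_{d+1}$ covered by $x_1,\dots,x_d$ sees only YES answers --- and then invokes the D'yachkov--Rykov bound, which yields the sharper $\Omega(d^2\log_2 n/\log_2 d)$; you instead use the generic information-theoretic bound coming from $d$-separation, which is weaker in its $d$-dependence but entirely sufficient for $\Theta_d(\log_2 n)$ and requires no external theorem. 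For the upper bound the paper stops at ``the dual of a $(2,d)$-cover-free family is a solution'' and leans on cited results for the existence of such families of logarithmic size (the citation there is in fact garbled --- Theorem~\ref{rcoverfree} is named twice and neither quoted theorem is an existence statement for $(2,d)$-cover-free duals), whereas you supply a self-contained probabilistic construction with $p=2/(d+2)$ and a union bound over the $\le n^{d+2}$ triples $(x,a,D)$, which cleanly closes exactly the step the paper outsources. The net effect is that your argument is more elementary and more self-contained, at the cost of not extracting the finer $d$-cover-free characterization that the paper reuses later (e.g.\ in the proof for Model $2_d''$).
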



\begin{proof}

We claim that the solution is the dual of a $d$-cover-free family, and the dual of a $(2,d)$-cover-free family is always a solution. This together with Theorem \ref{rcoverfree} and Theorem \ref{rcoverfree} implies the statement.

Suppose that the dual is not $d$-cover-free. Then there are $F_1,F_2,...,F_{d+1} \in \mathcal{F}$ with $F_{d+1} \subset \cup_{i=1}^{d}F_i.$ For the corresponding elements in the primal version we have $x_{d+1}$ such that any set $F\in \mathcal{F}$ contains one of the other elements $x_1,\dots, x_d$. Thus if $x_1,\dots, x_d$ are the defectives, $x_{d+1}$ teceives only YES answers, thus cannot distinguish this case from the case $x_{d+1}$ and any $d-1$ other elements are the defectives.

On the other hand let us assume $\mathcal{F}$ is the dual of a $(2,d)$-cover-free family. Then for every non-defective elements $x,y$ there is a set $F\in\mathcal{F}$ such that $F$ contains both $x$ and $y$, but none of the defective elements, thus $x$ finds out that $y$ is not defective, $\cF$ solves Model $2_d'$. Indeed, there is an element in the intersection of the duals of $x$ and $y$ that is not contained by the duals of the defective elements by the $(2,d)$-cover-free property. That element is the dual of a set $F\in\mathcal{F}$ that has the desired properties.


\end{proof}

\subsection{Model $2''_d$}

The fact that defective elements cannot gather any information about the other elements shows that even $d-1$ elements together cannot always find the defectives. However, if $d$ elements share information, then either they are all the defectives and they do not need to gather information about the other elements, or at least one of them is not a defective, and then there is a solution by Model $2_d'$.

\vspace{2mm}

\textbf{Property:} $d$ elements together know who the defective elements are.

\vspace{2mm}

\begin{thm}\label{model2''d} $\mathcal{F}_n$ solves Model $2_d''$ if and only if its dual $\mathcal{G}$ is $d$-union-free and $\mathcal{G}^d$ is Sperner and intersection-cancellative.

\end{thm}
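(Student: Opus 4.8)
The plan is to translate the whole model into the language of the dual family $\cG = \cF'$ on the ground set $\cF$, where the member of $\cG$ attached to an element $a\in[n]$ is $\cF_a$. For a set $D\subseteq[n]$ of defectives a query $F$ receives answer YES exactly when $F$ meets $D$, so the set of YES-queries is $\cF_D:=\bigcup_{a\in D}\cF_a$; since $\cF$ is $d$-separating the sets $\cF_a$ ($a\in D$) are pairwise distinct, hence $\cF_D$ is a member of $\cG^d$, and $D\mapsto\cF_D$ is injective precisely when $\cG$ is $d$-union-free. A sharing set $S$ of $d$ elements collectively sees the YES/NO label of every query meeting $S$, i.e.\ of every query in $\cF_S=\bigcup_{s\in S}\cF_s$; since $\cF_S$ is known to them, their full information is the set $\cF_S\cap\cF_D$. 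Thus $S$ can always identify $D$ iff the map $D\mapsto\cF_S\cap\cF_D$ is injective over $d$-sets $D$, and $\cF$ solves Model $2''_d$ iff this holds for every $d$-set $S$. The crucial observation is that $\cF_S$ is itself a member of $\cG^d$, so the condition becomes a statement purely about triples $A,B,C\in\cG^d$ arising as $\cF_S,\cF_D,\cF_{D'}$.

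For sufficiency I would assume all three properties and suppose $\cF_S\cap\cF_D=\cF_S\cap\cF_{D'}$ with $D\neq D'$; writing $A=\cF_S$, $B=\cF_D$, $C=\cF_{D'}$, $d$-union-freeness gives $B\neq C$ and it remains to reach a contradiction. If $A,B,C$ are pairwise distinct, intersection-cancellativity of $\cG^d$ (read for three distinct members) forces $B=C$, a contradiction. The remaining cases are $A=B$ or $A=C$; in the first, $A\cap C=A\cap B=A$ gives $\cF_D=A\subseteq C=\cF_{D'}$, and since $B\neq C$ this is a proper containment of two distinct members of $\cG^d$, contradicting the Sperner property, while $A=C$ is symmetric. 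This is exactly where Sperner is needed and where restricting intersection-cancellativity to distinct triples matters: these degenerate coincidences $\cF_S=\cF_D$ (the sharing set and the defective set producing the same YES-pattern) are the main obstacle, and the reason a single cancellativity hypothesis does not already subsume Sperner.

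For necessity I would argue by contraposition, producing in each case a $d$-set $S$ and two distinct defective sets it cannot tell apart, using throughout that taking $S$ to be a preimage of a member $A\in\cG^d$ makes the observation equal to $A\cap\cF_D$. If $\cG$ is not $d$-union-free there are $D\neq D'$ with $\cF_D=\cF_{D'}$, and then no sharing set at all distinguishes them. If $\cG^d$ is not Sperner, choose $\cF_D\subsetneq\cF_{D'}$ and let the defectives be $D$ with $S=D$: the sharing set sees all of $\cF_S=\cF_D$ as YES, which is equally consistent with $D'$. If $\cG^d$ is not intersection-cancellative, pick distinct $A,B,C\in\cG^d$ with $A\cap B=A\cap C$, let $S$ be a $d$-set with $\cF_S=A$, and take the defectives to be a preimage of $B$ versus a preimage of $C$; these differ as sets yet yield $S$ the identical observation $A\cap B=A\cap C$. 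Each case exhibits a failure of the model, completing the characterization. I expect the only routine bookkeeping (choosing element-preimages of members of $\cG^d$ and checking that the relevant $\cF_a$ are distinct) to be straightforward, with the genuine subtlety concentrated in the degenerate sufficiency cases above.
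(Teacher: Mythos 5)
Your proof is correct, and it establishes exactly the same characterization, but via a different decomposition than the paper's. The paper proceeds in three steps: Lemma \ref{chard} first characterizes the model by two primal conditions on $d$-sets (an ordered separating condition $\bullet_1$ and a three-set condition $\bullet_2$), Lemma \ref{dual} dualizes each of these separately, and Claim \ref{canc} then shows that the symmetrized ``at least two of the three non-containments $\circ_1,\circ_2,\circ_3$'' condition is equivalent to intersection-cancellativity. You instead pass to the dual immediately, encode the information of a sharing set $S$ as $\cF_S\cap\cF_D$, reduce the model to injectivity of $D\mapsto \cF_S\cap\cF_D$ for every $S$, and split on coincidences among $A=\cF_S$, $B=\cF_D$, $C=\cF_{D'}$. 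This absorbs Claim \ref{canc} entirely (you apply cancellativity directly in the form $A\cap B=A\cap C\Rightarrow B=C$ rather than through the two-of-three reformulation) and makes it transparent that the Sperner hypothesis is precisely what handles the degenerate case $\cF_S=\cF_D$, which in the paper is hidden inside the dualization of $\bullet_1$. The trade-off is that your argument leans on the bookkeeping fact that the sets $\cF_a$ are pairwise distinct (so that $d$-sets of elements correspond to members of $\cG^d$ and preimages can be chosen); this follows from $d$-separation as you indicate, and is equally implicit in the paper's dualization step, so there is no gap.
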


Note that we know the maximum possible size of a Sperner and intersection cancellative family (by  results of Frankl and F\"uredi \cite{FF84} and Tolhuizen \cite{T00}), but we do not know if that construction can be written as $\cG^d$ for a $d$-union-free family $\cG$.

\begin{thm} Suppose $\mathcal{F}_n$ solves Model $2_d''$ and has minimum cardinality. Then we have $$|\mathcal{F}_n|= \Theta_d( \log_2 n).$$

\end{thm}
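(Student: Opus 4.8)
The plan is to prove the two bounds separately, and to obtain the upper bound by reducing Model $2_d''$ to Model $2_d'$, which was already handled in Theorem \ref{model2dprime}; the characterization in Theorem \ref{model2''d} will not be needed for the cardinality estimate.

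First I would dispose of the lower bound, which is free: every family solving one of the models of this section must be $d$-separating (otherwise Questioner could not reconstruct the defectives from the full set of answers), so by Fact \ref{dsepdual} together with the information-theoretic estimate $\log_2\binom{n}{d}$ we get $|\mathcal{F}_n| = \Omega_d(\log_2 n)$, exactly as recorded for every model in this section.

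For the upper bound, the key claim is that \emph{any} family $\cF$ solving Model $2_d'$ automatically solves Model $2_d''$; granting this, I take a minimum-cardinality solution of Model $2_d'$, which has size $\Theta_d(\log_2 n)$ by Theorem \ref{model2dprime}, and read off a solution of Model $2_d''$ of size $O_d(\log_2 n)$. To prove the claim, fix any set $S$ of $d$ elements that pool their answers and let $D$ be the true defective set. If some element $s \in S$ is non-defective, then by the defining property of Model $2_d'$ the subfamily $\cF_s$ already identifies $D$, and since the pooled information of $S$ contains all answers to queries meeting $s$, the group identifies $D$. The remaining case is $S \subseteq D$, which forces $S = D$ since $|S| = |D| = d$; here every query meeting $S$ returns YES, and the group must certify that it is itself the defective set.

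The hard part is exactly this last case, where the group coincides with the defectives and therefore learns nothing from the outside. I would argue by uniqueness: suppose some other $d$-set $D'$ produced the same pooled observation, i.e. $D'$ meets every query that meets $S$. Choosing $y \in S \setminus D'$ (nonempty as $D' \neq S$) and passing to the dual, this says $\cF_y \subseteq \bigcup_{w \in D'} \cF_w$, a union of $d$ dual sets, all distinct from $\cF_y$ since $y \notin D'$. This contradicts the fact that the dual of a Model $2_d'$ solution is $d$-cover-free (the necessity direction established inside the proof of Theorem \ref{model2dprime}). Hence $D' = S$, the group's observation pins down $D = S$, and the claim follows. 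Combining the resulting $O_d(\log_2 n)$ upper bound with the $\Omega_d(\log_2 n)$ lower bound yields $|\mathcal{F}_n| = \Theta_d(\log_2 n)$. I note that this route avoids having to exhibit a $d$-union-free $\cG$ with $\cG^d$ Sperner and intersection-cancellative of the right size; consistency with Theorem \ref{model2''d} is expected but not needed here.
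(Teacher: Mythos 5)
Your proof is correct and follows essentially the same route as the paper: both reduce the upper bound to Theorem \ref{model2dprime} by showing that any solution of Model $2_d'$ also solves Model $2_d''$, splitting into the case where the group contains a non-defective element (handled by the Model $2_d'$ property) and the case $S=D$ (handled by the $d$-cover-free property of the dual). The only cosmetic difference is that the paper routes the $S=D$ case through Model $1_d$ via Theorem \ref{model1d}, whereas you re-derive that implication inline from the cover-free condition.
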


\begin{proof} It is easy to see that if a family solves both Model $1_d$ and Model $2_d'$, then it also solves Model $2_d''$. As we have seen in the proof of Theorem \ref{model2dprime}, a solution for Model $2_d'$ is the dual of a $d$-cover-free family, thus it also solves Model $1_d$ by Theorem \ref{model1d}. This implies the upper bound.


\end{proof}


\subsection{Model $3_d$}

Let us now examine the case when we require that elements do \emph{not} find the defective.
Note that as always, we assume that knowing all the answers is enough to find the defective element.
\vspace{2mm}

\textbf{Property:} no element knows any of the defective ones.

\vspace{2mm}

\noindent
Note that for $d=1$ there is a solution for Model $2_d$ and there is no solution for Model $3_d$ \cite{GV2017}. For $d \ge 2$ the situation is just the opposite: we will show that there is a solution for Model $3_d$ for $n$ large enough. We will use arguments similar to the ones used in \cite{BGPV2015}.

\begin{thm}\label{model3d} If $d\ge 2$, $r\ge 3$ and $n\ge dr+2$, then an $r$-uniform, $d$-regular linear hypergraph with girth at least 5 solves Model $3_d$.

\end{thm}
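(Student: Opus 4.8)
The plan is to check two separate things about the hypergraph $\cH$, whose hyperedge set is taken as the query family $\cF$ on vertex set $[n]$. First, $\cF$ must be $d$-separating, so that the Questioner can actually recover the defectives from the full answer vector (the standing assumption of every model). Second, $\cF$ must satisfy the defining property of Model $3_d$: no element can confirm any particular defective. The girth hypothesis will be needed only for the first point; the second follows from $d$-regularity, $r$-uniformity and linearity together with the counting bound $n\ge dr+2$. Throughout I would exploit the following structure: for a fixed vertex $x$, $d$-regularity gives exactly $d$ hyperedges $F_1,\dots,F_d$ through $x$ (these are the queries $x$ sees), and linearity forces any two of them to meet only in $x$, so they form a sunflower with core $\{x\}$, $\bigl|\bigcup_i F_i\bigr|=1+d(r-1)$, and every vertex $y\ne x$ lies in at most one $F_i$.

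For the Model $3_d$ property I would fix $x$, a $d$-set $D$ and some $y\in D$, and construct a $d$-set $D'$ with $y\notin D'$ giving $x$ the same YES/NO pattern (i.e.\ the same $F_i$ meet $D$ and $D'$). If $y=x$, then $x\in D$ and all $F_i$ answer YES; I take $D'$ to be a transversal choosing one vertex of each $F_i\setminus\{x\}$, which gives $d$ distinct non-$x$ vertices (the cores are disjoint) keeping every $F_i$ YES. If $y\ne x$, then $y$ lies in at most one $F_j$. When $y$ lies in no $F_i$, or when $F_j$ already contains a second defective, deleting $y$ changes no answer and I restore the cardinality by adjoining any $z\notin D\cup\bigcup_i F_i$, which exists precisely because $n\ge dr+2>\bigl|D\cup\bigcup_i F_i\bigr|$. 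In the remaining case $F_j\cap D=\{y\}$, I instead adjoin some $z\in F_j\setminus\{x,y\}$ (nonempty since $r\ge 3$, and automatically outside $D$), which keeps $F_j$ YES while the sunflower condition guarantees $z$ lies in no other $F_i$. In every case the answer pattern is preserved, so no $x$ can distinguish $y$ as defective.

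The heart of the argument, and the step I expect to be the main obstacle, is $d$-separating, which I would prove by contradiction by manufacturing a short Berge-cycle. Suppose two distinct $d$-sets $X_1\ne X_2$ are met by exactly the same hyperedges, and set $A=X_1\setminus X_2$, $B=X_2\setminus X_1$, $C=X_1\cap X_2$, with $|A|=|B|=k\ge 1$. The key lemma is that for each $a\in A$ the $d$ hyperedges through $a$ meet $X_2$ in exactly one vertex each and cover $X_2$ bijectively: each such hyperedge meets $X_1$ through $a$, hence (as the two families of meeting hyperedges coincide) meets $X_2$; no $X_2$-vertex lies in two of them by the sunflower property; and a double count of incidences between $d$ hyperedges and $d$ target vertices forces a perfect matching. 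The symmetric statement holds for each $b\in B$ relative to $X_1$, and in particular every pair $(a,b)\in A\times B$ is joined by a hyperedge $F_{a,b}$. If $k\ge 2$, picking $a_1\ne a_2$ in $A$ and $b_1\ne b_2$ in $B$, the four hyperedges $F_{a_1,b_1},F_{a_2,b_1},F_{a_2,b_2},F_{a_1,b_2}$ form a Berge-$C_4$ on the cyclically ordered vertices $a_1,b_1,a_2,b_2$; the matching lemma forces them to be pairwise distinct, since any coincidence would place two $X_1$-vertices or two $X_2$-vertices in one hyperedge, and this contradicts girth $\ge 5$. If $k=1$, write $A=\{a\}$, $B=\{b\}$; since $d\ge 2$ we have $C\ne\emptyset$, and for $c\in C$ the three hyperedges joining $\{a,b\}$, $\{b,c\}$, $\{a,c\}$ (distinct by the same linearity/matching bookkeeping) form a Berge-$C_3$, again contradicting the girth bound. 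Thus no such $X_1,X_2$ exist and $\cF$ is $d$-separating. The delicate point, as indicated, is exactly the bookkeeping that keeps the candidate cycle-edges pairwise distinct, and this is what the ``one target vertex per hyperedge'' matching statement is designed to supply.
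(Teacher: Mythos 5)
Your proof is correct. The first half (no single element can identify a defective) is the same argument as the paper's, only carried out in full: the paper just notes that the neighbourhood of $x$ is a sunflower of $d$ petals of size $r-1$ with more than $d$ vertices outside and declares the rest easy, whereas you do the explicit case analysis ($y=x$, $y$ in no petal or in a petal with another defective, $y$ the unique defective of its petal), which is where $r\ge 3$ and $n\ge dr+2$ actually get used. The second half ($d$-separating) is organized genuinely differently. The paper argues constructively, as a decoding procedure: the all-YES elements are the candidates, the same incidence-count you call the matching lemma shows a non-defective candidate is joined to every defective, girth $\ge 5$ then caps the candidates at $d+1$ via a Berge-$C_4$, and a Berge-$C_3$ argument identifies the unique impostor as the one candidate joined to all others. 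You instead negate the $d$-separating definition directly, take two confusable $d$-sets $X_1,X_2$, prove the perfect-matching lemma between the hyperedges through a vertex of $X_1\setminus X_2$ and the vertices of $X_2$ (and symmetrically), and extract a Berge-$C_4$ when $|X_1\setminus X_2|\ge 2$ or a Berge-$C_3$ (through a common vertex, using $d\ge 2$) when $|X_1\setminus X_2|=1$. Both arguments bottom out in the same matching observation and the same girth obstruction, but yours verifies the separating property abstractly while the paper's yields an explicit identification rule for the Questioner; your version also makes the pairwise-distinctness of the cycle edges, which the paper leaves implicit, completely explicit. No gaps.
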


\begin{proof} Let us consider an $r$-uniform, $d$-regular linear hypergraph $\cF$ of girth 5. For an arbitrary element $x$ its neighborhood consists of $d$ disjoint sets of size $r-1$. Also, there are more than $d$ elements not in its neighborhood. It is easy to see that by $r\ge 3$ $x$ cannot identify any defective elements.

On the other hand, if we know all the answers, the YES answers form stars with the defective elements in the centers. The elements that get only YES answers are the candidates for being defective. Every candidate that is not defective has to be connected to all the defectives. Two such candidate would form a Berge-$C_4$ with any two of the defective elements, thus there is only one additional candidate. But then it is the only one among the $d+1$ candidates that is connected to the other candidates, otherwise we could find a Berge-$C_3$.
\end{proof}

\begin{cor}\label{model3dcor} If $n$ is large enough compared to $d>1$, then there is a solution for Model $3_d$.

\end{cor}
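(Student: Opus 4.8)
The plan is to read the corollary as a direct consequence of Theorem~\ref{model3d} combined with the existence result Theorem~\ref{girth}; all of the combinatorial content sits already inside Theorem~\ref{model3d}, so the only remaining task is to produce, for the given $d>1$ and for all large $n$, a hypergraph of the type demanded there. Concretely, Theorem~\ref{model3d} asserts that whenever $d\ge 2$, $r\ge 3$ and $n\ge dr+2$, the edge set of any $r$-uniform, $d$-regular, linear hypergraph of girth at least $5$ on $[n]$ solves Model~$3_d$. Thus it suffices to exhibit one such hypergraph for every sufficiently large $n$.

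First I would fix the parameters by taking $r=3$ (the smallest value permitted by the hypotheses $r\ge 3$ common to both Theorem~\ref{model3d} and Theorem~\ref{girth}) and girth target $g=5$. With these choices the bound $n\ge dr+2=3d+2$ required by Theorem~\ref{model3d} holds for all large $n$, while Theorem~\ref{girth}, applied with $d\ge 2$, $r=3$, $g=5$, furnishes a linear, $d$-regular, $3$-uniform hypergraph $\cH$ of girth at least $5$ on $n$ vertices as soon as $n$ is large and $3\mid dn$. Feeding $\cH$ into Theorem~\ref{model3d} shows that $\cH$, viewed as a query family $\cF$, solves Model~$3_d$. This already proves the statement for every sufficiently large $n$ lying in the residue class prescribed by $3\mid dn$, and in particular for all large $n$ when $3\mid d$.

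The one genuine point to address, and the step I expect to be the main obstacle, is the divisibility constraint $r\mid dn$ built into Theorem~\ref{girth}: when $\gcd(d,3)=1$ and $3\nmid n$ no $3$-uniform $d$-regular hypergraph on exactly $n$ vertices exists, and one cannot in general repair this by merely enlarging $r$ (for example, for $d=2$ and $n$ an odd prime the only divisors of $dn$ that are at least $3$ are $n$ and $2n$, both exceeding $(n-2)/2$, so no admissible $r$ with $r\mid dn$ and $n\ge dr+2$ exists at all). To cover an $n$ not of the required form I would pad: whenever $\{n-1,n\}$ contains an integer $m$ with $3\mid dm$, build $\cH$ on $m$ vertices as above and extend it to $[n]$ by adjoining a single vertex $x_0$ lying in no edge of $\cF$. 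The point to verify is that $\cF$ stays $d$-separating on $[n]$: since $x_0$ is invisible to every query, two $d$-sets can fail to be separated only if they differ solely in $x_0$, which is impossible for a \emph{unique} invisible vertex, while in all other cases a separating edge is produced from regularity and linearity, as any vertex $z$ lies in exactly $d$ edges of which at most $d-1$ can meet a fixed $(d-1)$-set, leaving an edge through $z$ that avoids it. The ``no element learns a defective'' property is preserved because each original vertex sees exactly the same answers as before and $x_0$ sees none. The delicate bookkeeping is the residue where both $n-1$ and $n$ fail the divisibility, since two invisible (or twin) vertices would create an indistinguishable pair; absorbing this case forces one to let $r$ range over several small values chosen according to $d$, and I regard reconciling the divisibility constraints of Theorem~\ref{girth} across all residues of $n$ as the main — and essentially only — difficulty, the structural heart of the argument being supplied entirely by Theorems~\ref{model3d} and~\ref{girth}.
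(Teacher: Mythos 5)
Your overall strategy is the paper's: combine Theorem~\ref{model3d} with the existence result Theorem~\ref{girth} and then fight the divisibility constraint $r\mid dn$. But your proof has a genuine gap exactly at the point you flag as ``the main difficulty'': you never resolve it. Fixing $r=3$ forces $3\mid dn$, so when $\gcd(d,3)=1$ you need $3\mid n$; padding by one isolated vertex only reaches $n$ with $3\mid n-1$, and (as you correctly observe) two invisible vertices kill $d$-separation. The residue $n\equiv 2\pmod 3$ is therefore left uncovered, and ``let $r$ range over several small values chosen according to $d$'' is a promissory note, not an argument. The paper sidesteps the whole issue by choosing $r$ as a function of $d$: for $d\ge 3$ take $r=d$, so $r\mid dn$ holds for \emph{every} $n$ and no padding is needed at all; for $d=2$ take $r=4$ (so the constraint is just that $n$ is even), and for odd $n$ build the hypergraph on $n+1$ vertices and \emph{delete} a vertex. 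Deletion, unlike your addition of an isolated vertex, keeps the hypergraph $d$-regular and linear with girth at least $5$ and merely shrinks some edges to size $r-1\ge 3$, and the paper notes that the proof of Theorem~\ref{model3d} only uses that every edge has size at least $3$, so it still applies verbatim.

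A secondary issue: even in the residues where your padding is available, you cannot literally ``feed $\cH$ into Theorem~\ref{model3d},'' since the padded hypergraph is not $d$-regular (the new vertex has degree $0$), so the theorem's hypotheses fail and you must re-derive both the $d$-separation and the non-identification properties by hand. You do sketch the separation argument for the pairs involving $x_0$ (the regularity-plus-linearity counting is fine), but this extra verification is avoidable, and avoided by the paper, via the delete-a-vertex route.
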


\begin{proof}

If $d\ge 3$, let us choose $r=d$, then Theorem \ref{girth} shows that we can find such a family. If $d=2$, then Theorem \ref{girth} with $r=4$ shows we can find such a family for $n$ even. If $n$ is odd, we find such a family for $n+1$, and delete an element. The resulting family is not 4-uniform, but that property is not actually needed (in fact, we used only that every set in $\cF$ has size at least 3).

\end{proof}

\subsection{Model $4_d$}

Now we start to investigate models where elements can share information among them. Let $i$ and $j$ be integers with $1 \le i < j \le n$. When we say that a set of $j$ elements together know the defective elements, we mean that knowing the answers to all the queries containing at least one element from the set is enough to find all the defectives. Similarly, when we say that a set of $i$ elements do not know any of the defectives, we mean that knowing the answers to all the queries intersecting the set is not enough to identify any of the defective elements.
\vspace{2mm}

\textbf{Property:} any $j$ elements together know the defectives, but $i$ elements together do not know any of the defectives, for some $i$ and $j$ with $1 \le i < j \le n$.

\vspace{2mm}

Note that Corollary \ref{model3dcor} shows that there is a solution if $d>1$, $i=1$ and $j=n$, where  $n$ is large enough compared to $d$. In fact any $n-r+1$ elements together know the answer to all the queries, thus it is enough to assume $j\ge n-r+1$. A more precise version of Theorem \ref{girth} (see \cite{EL14}, Theorem 5) shows that about $\frac{n^{1/6}}{d}$ can be chosen as $r$, which shows that $j$ can be as small as $n-\frac{n^{1/6}}{d}$.

\begin{prop} If $i\ge d$ or $j<d$, then there is no solution.

\end{prop}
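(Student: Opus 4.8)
I would split the statement into its two independent halves, in each case pinpointing the requirement that must fail.

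First, the case $j<d$. Here the plan is to exhibit a single set of $j$ elements that cannot recover all the defectives, so that the ``any $j$ elements know the defectives'' half of the Property fails. I would fix an arbitrary $j$-set $S$ and compare two defective configurations $D_1,D_2$ with $S\subseteq D_1$, $S\subseteq D_2$, $|D_1|=|D_2|=d$ and $D_1\neq D_2$; this is possible as soon as $n\geq d+1$ (the only nondegenerate case), since $d-j\geq 1$ and $[n]\setminus S$ then has at least two distinct $(d-j)$-subsets. Because $S$ is contained in both configurations, every query meeting $S$ automatically meets $D_1$ and meets $D_2$, hence answers YES under both; so the information available to $S$ is literally identical for $D_1$ and for $D_2$. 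As $D_1\neq D_2$, the elements of $S$ cannot determine the defective set, which is exactly the failure we want. This half is routine.

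The substantial case is $i\geq d$, where I must produce some set of $i$ elements that can pin down a defective, breaking the privacy half. For a $d$-set $D$ write $\cF_D:=\{F\in\cF:F\cap D\neq\emptyset\}$ for the family of queries it meets; note that $\cF$ being $d$-separating is exactly the statement that $D\mapsto\cF_D$ is injective on $d$-sets, since two $d$-sets are separated precisely when some query lies in the symmetric difference of their query-neighbourhoods. The key idea is to choose the defective configuration to be a $d$-set $D^{\ast}$ for which $\cF_{D^{\ast}}$ is \emph{maximal under inclusion} among all the $\cF_D$; such a $D^{\ast}$ exists because this is a finite poset. I would then take the observing set to be any $i$-set $S\supseteq D^{\ast}$ (using $i\leq n$). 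The maximality is what makes the argument work: for every rival $d$-set $D'\neq D^{\ast}$ we must have $\cF_{D^{\ast}}\not\subseteq\cF_{D'}$, for otherwise $\cF_{D^{\ast}}\subseteq\cF_{D'}$ together with maximality would force $\cF_{D^{\ast}}=\cF_{D'}$, contradicting injectivity. Hence there is a query $F$ meeting $D^{\ast}$ (so meeting $S$, and therefore observed by $S$) but missing $D'$. Under the true configuration $D^{\ast}$ this $F$ answers YES, while under $D'$ it would answer NO, so $S$'s observations rule out every $D'\neq D^{\ast}$; thus $S$ reconstructs the defective set exactly and in particular identifies a defective, contradicting privacy.

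I expect the main obstacle to be precisely this $i\geq d$ direction, and more specifically resisting the tempting but false shortcut of letting the observers simply \emph{be} the defectives: a $d$-set of defectives need not determine itself from its own YES-answers, because a rival set that is separated from it only by queries it does not witness remains consistent with what it sees. The maximality choice of $D^{\ast}$ is exactly what closes this gap, by guaranteeing that every rival is cut off by a query the observers actually see; the crux of the write-up will be verifying that this witnessing query is of the right form, namely one meeting $D^{\ast}$ rather than merely separating $D^{\ast}$ from $D'$.
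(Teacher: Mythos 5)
Your proposal is correct and follows essentially the same route as the paper: for $j<d$ you observe that a $j$-set contained in the defective set sees only YES answers and so cannot distinguish two completions, and for $i\ge d$ you pick a $d$-set $D^{\ast}$ whose query-neighbourhood $\cF_{D^{\ast}}$ is maximal under inclusion and use $d$-separation to cut off every rival by a query that $D^{\ast}$ itself witnesses. The only cosmetic difference is that you enlarge $D^{\ast}$ to an $i$-set of observers, whereas the paper lets the $d$ defectives themselves share information; both are the same argument.
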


\begin{proof} Let us assume first we are given $j$ elements. If all of them are defectives, they only get YES answers, and do not gather any information about the other elements.

If $i\ge d$, for a set $X$ let $\mathcal{F}_X:= \cup_{x\in X}\mathcal{F}_x$. Let us consider among the $d$-element sets $X$ such that $\mathcal{F}_X$ is maximal. We claim that if the elements of X are the defectives, they can find it out by sharing information. Indeed, they get only YES answers. If they cannot be sure that they are the defective ones, then there is another $d$-set $Y$ that could be the set of defectives. It means all the answers to the queries in $\mathcal{F}_X$ would still be YES if $Y$ was the set of defectives, i.e. $\mathcal{F}_X \subseteq\mathcal{F}_Y$. By the assumption on $X$ we have $\mathcal{F}_X=\mathcal{F}_Y$, but then the family is not $d$-separating.

\end{proof}

\begin{prop} If $j=d$, then there is no solution.

\end{prop}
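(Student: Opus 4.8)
The plan is to assume, for contradiction, that some $\cF$ solves Model $4_d$ with $j=d$ and some $i$ satisfying $1\le i<d$, and then to extract two contradictory consequences from the two halves of the property. I may assume $n\ge d+1$: if $n=d$, then $[n]$ is the only possible defective set, so every element trivially knows all the defectives and the ``hiding'' half of the property fails outright.

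First I would exploit the knowing half in the configuration that is worst for the sharers: when the $d$ chosen elements are exactly the defectives. Fix any $d$-set $X$ and suppose the defective set is $X$ itself. Then every query in $\cF_X$ contains a defective, so $X$ sees only YES answers, and a $d$-set $D'$ is consistent with this observation precisely when every member of $\cF_X$ meets $D'$, that is, when $\cF_X\subseteq\cF_{D'}$. Since the sharers must be able to pin down the defectives, $X$ must be the only such $D'$. This yields the key structural consequence
$$(\ast)\qquad \text{for every }d\text{-set }X\text{ there is no }d\text{-set }D'\neq X\text{ with }\cF_X\subseteq\cF_{D'}.$$

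Next I would use the hiding half to manufacture a violation of $(\ast)$. Take any $i$-set $S$, fix $y\in S$, and extend $S$ to a $d$-set $D\supseteq S$ (possible since $i<d\le n$); let $D$ be the defectives. As $S\subseteq D$, every query in $\cF_S$ meets $D$, so by the hiding property applied to this $D$ and to $y\in S\subseteq D$ there is a $d$-set $B$ with $y\notin B$ such that every member of $\cF_S$ also meets $B$; in particular $\cF_y\subseteq\cF_S\subseteq\cF_B$. Now I swap one element: pick any $b\in B$ (which exists as $|B|=d\ge 2$) and set $X:=(B\setminus\{b\})\cup\{y\}$. Then $X$ is a $d$-set with $X\neq B$, since $y\in X\setminus B$, and $\cF_X=\cF_y\cup\cF_{B\setminus\{b\}}\subseteq\cF_B$ because both $\cF_y\subseteq\cF_B$ and $\cF_{B\setminus\{b\}}\subseteq\cF_B$. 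Thus $B$ is a $d$-set distinct from $X$ with $\cF_X\subseteq\cF_B$, contradicting $(\ast)$.

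The conceptual crux, and the step I expect to be the main obstacle, is the derivation of $(\ast)$: one must recognize that the genuinely binding instance of the knowing property is the all-defective configuration, where the sharers receive no NO answers and can certify themselves only if no other $d$-set hits all of $\cF_X$. Once $(\ast)$ is available, the hiding property is exactly what supplies a competing $d$-set $B$, and the single-element swap that turns $B$ into a dominated $d$-set $X$ is routine. I would still verify the degenerate sizes ($|B|=d\ge 2$ so that $b$ exists, and $y\notin B$ so that the swap genuinely changes the set), but these are immediate from $d\ge 2$ and $n\ge d+1$.
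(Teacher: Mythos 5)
Your proof is correct and follows essentially the same route as the paper: the paper likewise uses the hiding property to produce a competing $d$-set $D$ avoiding an element $x$ but meeting every member of $\cF_x$, and then applies the knowing property to the defective set $(D\setminus\{y_d\})\cup\{x\}$ — exactly your one-element swap — to reach the same contradiction. Your statement $(\ast)$ is just an explicit packaging of the step the paper performs inline, so the two arguments coincide.
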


\begin{proof} If $x$ receives only YES answers, he cannot find out he is defective, thus there is a set $D=\{Y_1,\dots, y_d\}$ no containing $X$ that intersects every member of $\mathcal{F}_x$. On the other hand, if $x$ and $y_1,\dots, y_{d-1}$ are the defectives, they together can figure that out. In particular, they know that the set of defectives is not $D$, thus there is a set intersecting $\{x,y_1,\dots,y_{d-1}\}$ but not $D$. Such a set would be a member of $\mathcal{F}_x$ that does not intersect $D$, a contradiction.

\end{proof}






\section{Proofs}

\subsection{Proof of Theorem \ref{model1d}}

The dual of the $d$-cover-free property is that for every elements $x_1,\dots,x_{d+1}$ we cannot have that the sets that contain $x_{d+1}$ all contain at least one of the other $x_i$'s.
Let $$\cH_x:=\{F\setminus\{x\}: F\in \cF_x\},$$ and $\tau(\cH_x)$ be the size of the smallest set that intersects every member of $\cH_x$. With these notation the following lemma finishes the proof of Theorem \ref{model1d}.

\begin{lem} An element $x$ always finds out if he is defective if and only if $\tau(\cH_x)> d$.

\end{lem}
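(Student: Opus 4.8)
The plan is to reduce the statement to a pure hitting-set (transversal) condition by analyzing exactly when the answer pattern seen by $x$ is ambiguous. First I would record the one-sided observation that drives everything: since $x\in F$ for every $F\in\cF_x$, if $x$ is defective then $x$ sees the answer YES on every query of $\cF_x$. Consequently the only pattern from which $x$ could fail to deduce his own status is the all-YES pattern; any observed NO answer already forces the conclusion that $x$ is not defective. Hence $x$ \emph{always} finds out whether he is defective precisely when the all-YES pattern can arise only when $x$ is defective.

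Next I would translate the problematic all-YES-but-not-defective scenario into a transversal statement. Suppose $x$ is not defective and $D$ is the $d$-element set of defectives, so $x\notin D$. The query $F\in\cF_x$ gets answer YES iff $F\cap D\neq\emptyset$, and since $x\notin D$ this is equivalent to $(F\setminus\{x\})\cap D\neq\emptyset$. Thus $x$ sees all YES iff $D$ meets every member of $\cH_x=\{F\setminus\{x\}:F\in\cF_x\}$, i.e.\ iff $D$ is a transversal of $\cH_x$. Combining with the previous paragraph, $x$ fails to find out iff there exists a $d$-element transversal $D\subseteq[n]\setminus\{x\}$ of $\cH_x$.

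It then remains to show that such a $d$-element transversal exists iff $\tau(\cH_x)\le d$. One direction is immediate: a $d$-element transversal witnesses $\tau(\cH_x)\le d$. For the converse I would take a minimum transversal $T$ with $|T|=\tau(\cH_x)\le d$ (note $T\subseteq[n]\setminus\{x\}$, since every member of $\cH_x$ omits $x$) and pad it to size exactly $d$ by adjoining arbitrary elements of $[n]\setminus(\{x\}\cup T)$; as any superset of a transversal is again a transversal, the padded set is a $d$-element transversal avoiding $x$. Negating the equivalence, $x$ always finds out iff $\tau(\cH_x)>d$, which is the claim.

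The only real obstacle is the padding step: it needs enough room, i.e.\ $|[n]\setminus\{x\}|=n-1\ge d$, which is exactly the separating regime $d\le n-1$ in force throughout the paper. I would also dispatch the degenerate cases for completeness: if some query equals $\{x\}$ then $\emptyset\in\cH_x$, so no transversal exists, $\tau(\cH_x)=\infty>d$, and indeed that single query reveals $x$'s status; whereas if $x$ lies in no query then $\cH_x=\emptyset$, $\tau(\cH_x)=0\le d$, and $x$ learns nothing, consistent with the stated equivalence.
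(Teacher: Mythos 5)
Your proof is correct and follows essentially the same route as the paper's: reduce to the all-YES answer pattern, observe that ambiguity for $x$ is equivalent to the existence of a $d$-element set avoiding $x$ that meets every member of $\cH_x$, and pad a minimum transversal up to size $d$ (the paper's ``together with arbitrary additional elements''). Your version merely spells out the padding and degenerate cases more explicitly.
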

\begin{proof} If $x$ gets a NO answer, he learns he is not defective, thus we can assume he only gets YES answers.
If $\cH_x$ cannot be covered by at most $d$ elements different from $x$, then the only way to get YES answer to every element of $\cF_x$ is if $x$ is defective (as defective elements cover the sets that get YES answers). On the other hand if $\cH_x$ can be covered by at most $d$ elements different from $x$, then $x$ cannot exclude the possibility that those are the defective elements, together with arbitrary additional elements to reach $d$ defectives.

\end{proof}

\subsection{Proof of Theorem \ref{model2''d}}

\begin{lemma}\label{chard}

$\mathcal{F} \subset 2^{[n]}$ solves Model $2_d''$ if and only if the following two properties hold:

\vspace{2mm}

$\bullet_1$ for any two different $d$-element sets $X,Y \subset [n]$ there is $F \in \mathcal{F}$ with $F \cap X\neq \emptyset$ and 

\hspace{4mm}$F \cap Y = \emptyset$, and

\vspace{2mm}

$\bullet_2$ for any three different $d$-element sets $X,Y,Z \subset [n]$ there is $F \in \mathcal{F}$ with ($F \cap X \neq \emptyset$ 

\hspace{4mm} and $F \cap Y \neq \emptyset$ and $F \cap Z = \emptyset$) or ($F \cap X \neq \emptyset$ and $F \cap Z \neq \emptyset$ and $F \cap Y = \emptyset$).

\end{lemma}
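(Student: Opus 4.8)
The plan is to first translate the combinatorial description of Model $2_d''$ into a clean statement about separation, and then verify the two-way implication with $\bullet_1$ and $\bullet_2$ by a short case analysis.

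First I would record the precise meaning of a set $S$ of $d$ elements knowing the defectives. By the convention for shared information, the elements of $S$ learn exactly the answers to the queries in $\mathcal{F}_S:=\bigcup_{x\in S}\mathcal{F}_x=\{F\in\mathcal{F}:F\cap S\neq\emptyset\}$, and each such answer is YES precisely when $F$ meets the fixed but unknown defective set $D$. Hence $S$ can recover $D$ exactly when no other $d$-set produces the same pattern of answers on $\mathcal{F}_S$. For a fixed $S$ this recovery succeeds for every possible $D$ if and only if, for all distinct $d$-sets $D\neq D'$, some $F\in\mathcal{F}_S$ separates them (meets exactly one of $D,D'$); imposing this for every $S$ yields the reformulation: $\mathcal{F}$ solves Model $2_d''$ if and only if for every $d$-set $S$ and every pair of distinct $d$-sets $D\neq D'$ there is $F\in\mathcal{F}$ with $F\cap S\neq\emptyset$ that separates $D$ and $D'$. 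Obtaining this reformulation, with the correct order of quantifiers, is the step that carries the real content; the rest is bookkeeping.

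Next I would derive $\bullet_1$ and $\bullet_2$ from the reformulation. For $\bullet_1$, given distinct $X,Y$ I apply it with $S=X$, $D=X$, $D'=Y$: the resulting $F$ meets $S=X$, so the only way it can separate $X,Y$ is $F\cap X\neq\emptyset$ and $F\cap Y=\emptyset$. For $\bullet_2$, given distinct $X,Y,Z$ I apply it with $S=X$, $D=Y$, $D'=Z$: the resulting $F$ meets $X$ and exactly one of $Y,Z$, which is precisely the disjunction in $\bullet_2$.

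Finally I would prove the converse by splitting on how $S$ sits relative to $D,D'$. If $S=D$, then $\bullet_1$ (with the roles $X=D$, $Y=D'$) supplies $F$ meeting $D=S$ and missing $D'$, which separates $D,D'$ and meets $S$; the case $S=D'$ is symmetric. If $S\notin\{D,D'\}$, then $S,D,D'$ are three distinct $d$-sets, and $\bullet_2$ (with $X=S$, $Y=D$, $Z=D'$) gives $F$ meeting $S$ and exactly one of $D,D'$, again as required. I expect no genuine difficulty here beyond keeping the roles of the three sets straight; the one point worth flagging is that the degenerate case $S\in\{D,D'\}$, where the sharing set is itself the defective set, is exactly what forces the one-sided condition $\bullet_1$, whereas a genuinely third set $S$ is what $\bullet_2$ is designed to handle.
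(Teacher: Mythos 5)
Your proof is correct and follows essentially the same route as the paper's: reformulate Model $2_d''$ as ``for every $d$-set $S$ and all distinct $d$-sets $D\neq D'$ some $F\in\mathcal{F}$ meeting $S$ separates $D$ and $D'$,'' then split on whether $S$ coincides with one of $D,D'$ (giving $\bullet_1$) or is a genuinely third set (giving $\bullet_2$). Your write-up is in fact somewhat more careful than the paper's, which argues only the necessity direction explicitly and leaves the case where the alternative candidate equals $S$ itself implicit.
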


\begin{proof}

1. Note that the property that Questioner can find out the answer is:
for any two different $d$-element sets $X,Y \subset [n]$ there is $F \in \mathcal{F}$ with ($F \cap X\neq \emptyset$ and $F \cap Y = \emptyset$) or ($F \cap Y \neq \emptyset$ and $F \cap X = \emptyset$). This property is contained in $\bullet_1$. 

\vspace{3mm}

Let us assume now $X$ is a set of size $d$.

\vspace{1mm}

2. If $X$ is the set of defectives, they have to find this out. It means that for a different $d$-element set $Y$, there should be an $F \in \mathcal{F}$ with $X \cap F \neq \emptyset$ and $Y \cap F = \emptyset$.

\vspace{2mm}

3. If $X$ is not the set of defectives, then another set $Y$ is, and they have to identify $Y$. Thus for a third $d$-element set $Z$, there should be a set that intersects $X$ (so they know the answer for it), and distinguishes $Y$ and $Z$, i.e. it intersects exactly one of them.

\end{proof}

\begin{lemma}\label{dual}

$\mathcal{F} \subset 2^{[n]}$ satisfies properties $\bullet_1$ and $\bullet_2$ if and only if its dual $\mathcal{G}$ is $d$-union-free and $\mathcal{G}^d$ is Sperner and intersection cancellative.

\end{lemma}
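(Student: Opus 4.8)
The plan is to translate everything through the duality that sends a $d$-element set $X\subseteq[n]$ to the set $\cF_X:=\bigcup_{x\in X}\cF_x\subseteq\cF$ of queries meeting $X$. The whole point of the dual family $\mathcal{G}=\{\cF_x:x\in[n]\}$ is that $F\cap X\neq\emptyset$ is the same as $F\in\cF_X$, and that the sets $\cF_X$ with $|X|=d$ are exactly the members of $\mathcal{G}^d$ (each such $\cF_X$ is a union of $d$ members of $\mathcal{G}$, and conversely). I would first record that $\bullet_1$ forces $x\mapsto\cF_x$ to be injective: if $\cF_x=\cF_y$ for $x\neq y$, then swapping $x$ and $y$ inside a $d$-set produces two distinct $d$-sets with equal duals, contradicting $\bullet_1$. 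Thus $\mathcal{G}$ genuinely has $n$ members and $\binom{[n]}{d}$ is carried onto $\mathcal{G}^d$ by $X\mapsto\cF_X$.

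Next I would rewrite the two bullets as containment/intersection statements about the $\cF_X$. Property $\bullet_1$ says exactly that $\cF_X\not\subseteq\cF_Y$ for every pair of distinct $d$-sets, and this splits into two halves: ruling out $\cF_X=\cF_Y$ for $X\neq Y$, and ruling out a proper inclusion $\cF_X\subsetneq\cF_Y$. The second half is precisely the Sperner property of $\mathcal{G}^d$. The first half is the assertion that $\mathcal{G}$ is $d$-union-free, since a union of $d$ distinct members of $\mathcal{G}$ determines the $d$-set it came from iff equal duals force equal $d$-sets. So I would establish $\bullet_1\Leftrightarrow$ ($\mathcal{G}$ is $d$-union-free and $\mathcal{G}^d$ is Sperner).

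For $\bullet_2$ I would first simplify the disjunction. Its failure for a triple $X,Y,Z$ means $\cF_X\cap\cF_Y\subseteq\cF_Z$ and $\cF_X\cap\cF_Z\subseteq\cF_Y$ simultaneously, and a one-line computation shows this is equivalent to $\cF_X\cap\cF_Y=\cF_X\cap\cF_Z$ (both sides then equal $\cF_X\cap\cF_Y\cap\cF_Z$). Hence $\bullet_2$ is exactly the statement that $\cF_X\cap\cF_Y\neq\cF_X\cap\cF_Z$ for all distinct $d$-sets $X,Y,Z$, which under the bijection $X\mapsto\cF_X$ is intersection cancellativity of $\mathcal{G}^d$ restricted to distinct triples. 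To get full intersection cancellativity I would dispose of the degenerate triples: setting $X=Y$ (or $X=Z$) turns the cancellation equation $\cF_X\cap\cF_Y=\cF_X\cap\cF_Z$ into the inclusion $\cF_Y\subseteq\cF_Z$ (resp.\ $\cF_Z\subseteq\cF_Y$), which is already forbidden by $\bullet_1$. Thus $\bullet_1\wedge\bullet_2$ yields intersection cancellativity of $\mathcal{G}^d$, and conversely intersection cancellativity returns $\bullet_2$ on distinct triples and, via the same degenerate cases, $\bullet_1$.

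The main thing to get right is the bookkeeping of the correspondence between $d$-sets and members of $\mathcal{G}^d$: one must be sure that $X\mapsto\cF_X$ is a genuine bijection before replacing quantification over $d$-sets by quantification over members of $\mathcal{G}^d$. This is exactly where $d$-union-freeness is used — it is the injectivity of this map — and it is the only place the three right-hand conditions interact, since intersection cancellativity already subsumes the Sperner property. A secondary point of care is matching the asymmetric ``or'' in $\bullet_2$ to the symmetric cancellation equation, which the reduction to $\cF_X\cap\cF_Y=\cF_X\cap\cF_Z$ resolves cleanly.
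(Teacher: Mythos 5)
Your proof is correct and follows the same overall strategy as the paper's: dualize via $X\mapsto\cF_X$, split $\bullet_1$ into $d$-union-freeness plus the Sperner property of $\cG^d$, and identify $\bullet_2$ with intersection cancellativity. The one genuinely different step is how you handle $\bullet_2$: the paper symmetrizes over the orderings of the triple, reformulates $\bullet_2$ as ``at least two out of three non-containment conditions hold,'' and proves a separate claim that this two-out-of-three condition characterizes intersection cancellativity; you instead observe directly that the failure of $\bullet_2$ for the ordered triple $(X,Y,Z)$ is equivalent to the single equation $\cF_X\cap\cF_Y=\cF_X\cap\cF_Z$, which is shorter and avoids the auxiliary claim. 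Your version is also more careful on two points the paper leaves implicit: you note that $\bullet_1$ forces $x\mapsto\cF_x$ (hence $X\mapsto\cF_X$) to be injective, which is what licenses replacing quantification over $d$-sets by quantification over members of $\cG^d$; and you explicitly dispose of the degenerate instances of the cancellation law where two of the three members of $\cG^d$ coincide, reducing them to the Sperner property supplied by $\bullet_1$ --- the paper's Claim about the two-out-of-three condition is stated for ``any three members'' but its proof silently assumes the three sets are distinct, so your treatment closes a small gap there.
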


\begin{proof}

The dual of $\bullet_1$ is the following statement:

\vspace{2mm}

$\bullet_3$ for two different subfamilies each consisting of $d$ sets $\{F_1,...,F_d\}, \{G_1,...,G_d\} \subset \cF$ 

\hspace{4mm}there is $f \in [n]$ with $f \in \cup_{i=1}^d F_i \setminus \cup_{i=1}^d G_i$. 

\vspace{2mm}



The dual of $\bullet_2$ is the following statement:

\vspace{2mm}

$\bullet_4$ for three different subfamilies each consisting of $d$ sets $$\{F_1,...,F_d\}, \{G_1,...,G_d\},
\{H_1,...,H_d\} \subset \cF$$ 

\hspace{4mm} there is $f \in [n]$ with either $$f \in (\cup_{i=1}^d F_i \cap \cup_{i=1}^d G_i) \setminus \cup_{i=1}^d H_i, \textrm{ or }$$ $$f \in (\cup_{i=1}^d F_i \cap \cup_{i=1}^d H_i) \setminus \cup_{i=1}^d G_i.$$ 

\vspace{4mm}

It is easy to see that $\bullet_3$ is equivalent to the statement that $\cG$ is $d$-union-free and $\mathcal{G}^d$ is Sperner. Now we claim that $\bullet_4$ means that $\cG^d$ is intersection cancellative. Let us use the following notation:
$$F:=\cup_{i=1}^d F_i, \ \ G:= \cup_{i=1}^d G_i, \ \ H:= \cup_{i=1}^d H_i.$$

\noindent
Using these, the existence of $f$ means either $F \cap G \not \subset H$ or $F \cap H \not \subset G$. Let us define three properties.

\vspace{1mm}
$\circ_1$ $F \cap G \not \subset H$.

\vspace{1mm}

$\circ_2$ $F \cap H \not \subset G$.

\vspace{1mm}

$\circ_3$ $H \cap G \not \subset F$.

\vspace{1mm}

\noindent
Property $\bullet_2$ (for these three sets in this order) means that at least one of $\circ_1$ and $\circ_2$ holds. Considering the same three sets in different orders we get that also at least one of $\circ_1$ and $\circ_3$ and one of $\circ_3$ and $\circ_2$ holds. It is true if and only if at least two of these three properties hold.

\vspace{2mm}

\noindent
To finish the proof of Lemma \ref{dual} we prove the following:

\begin{clm}\label{canc}

A family $\mathcal{H} \subset 2^{[n]}$ is intersection cancellative if and only if at least two out of $\circ_1, \circ_2$ and $\circ_3$ hold for any three members of it. 

\end{clm}

\begin{proof}

Let us assume $\cF'$ is intersection cancellative and let $F,G,H \in \mathcal{H}$. Let us assume at most one, say $\circ_3$ of the three properties holds, thus $\circ_1$ and $\circ_2$ do not hold. The first one implies $F\cap G \subset H$, and obviously $F \cap G \subset F$. Thus we have $F \cap G \subset F \cap H$. Similarly the second one implies $F \cap H \subset F \cap G$, hence they together imply $F\cap H=F\cap G$, which contradicts the intersection cancellative property and our assumption that $F,G,H$ are three different sets.

Let us assume now that $\cH$ is not intersection cancellative, thus we have $F\cap G=F\cap H$. This implies both $F\cap G \subset H$ and $F\cap H\subset G$, thus at most one of  $\circ_1, \circ_2$ and $\circ_3$ can hold.

\end{proof}

We are done with the proof of Theorem \ref{model2''d}.

\end{proof}

\section{Adaptive scenario}

A natural idea is to consider the adaptive versions of these problems. Here we assume the Questioner knows all the earlier answers, and then he can choose the next query. He can find the defective, and then use further queries to share some information with the elements. However, there are two versions of this problem. The elements might know the algorithm, and use the order of the queries to gain information, or they only receive the answers to the queries at the end in no particular order. 

For example in Model $2_d$ in the second version we require that for every element $x$ the family $\cF_x$ with the answers is enough to find all the defectives, i.e. for two distinct sets $D,D'$ of size $d$ there is a query that contains $x$ and intersects only one of $D$ and $D'$. It is still obviously not solvable, as every defective element only gets YES answers and no information about the others. However, in the first version Questioner may start with a $d$-separating family, then ask the set of defectives and then the set of non-defectives. This way every element has to look only at the last query that contains it. If the answer to that is YES, then it is the set of defectives, if the answer is NO, it is the set of non-defectives. In both cases the defectives are identified.


From now on we consider only the second version, i.e. the elements receive the answer to the queries containing them at the end of the algorithm in no particular order, and they only know the underlying set and the number of defectives. It is still possible for the Questioner to find the defective, and then share some information using further queries.

Let $t^a(d,n)$ denote the number of queries in the fastest adaptive algorithm that finds the $d$ defective (we mentioned some inequalities on $t^a(d,n)$ in the introduction), then $t^a(d,n)$ is a lower bound in every model. On the other hand $t^a(d,n)+d+1$ queries are enough in Model $1_d$, $t^a(d,n)+1$ queries are enough in Model $2_d'$ and $t^a(d,n)+d+1$ queries are enough in Model $2_d''$: first Questioner finds the $d$ defectives, then ask them as singletons, and/or the set of non-defectives. 

Let us consider now Model $3_d$. By Corollary \ref{model3dcor} there is a solution for $n$ large enough, but that solution is linear in $n$. On the other hand it can be seen easily that for $n=d+1$ there is no solution even adaptively. Here we give a faster algorithm.

\begin{thm} There is an adaptive algorithm that solves Model $3_d$ and uses at most $2d\log_2 n+5d$ queries if $n$ is large enough.

\end{thm}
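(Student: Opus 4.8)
The plan is to design an adaptive algorithm in two phases: first find the $d$ defectives efficiently, then use a controlled number of extra queries to destroy each element's ability to locate any defective, while preserving Questioner's full knowledge. The first phase is essentially free: the fastest adaptive search uses $t^a(d,n)=O(d\log_2 n)$ queries, and since the elements receive answers in no particular order at the end, I would want to bound this by $2d\log_2 n$ plus a small additive term. So after roughly $2d\log_2 n$ queries Questioner knows the defective set $D=\{z_1,\dots,z_d\}$ exactly, but I must be careful that the queries asked so far do not already reveal a defective to some element; the real work is in the second phase, designing $O(d)$ additional queries that guarantee the Model $3_d$ property from every element's local viewpoint.

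For the second phase I would exploit that once $D$ is known, Questioner can ask tailored sets. The Model $3_d$ requirement is: for every element $x$ and every defective $y$, after seeing only the answers to queries containing $x$, element $x$ cannot be certain that $y$ is defective. Since I have a budget of about $5d$ extra queries, I would aim to cover the situation by asking, for each defective $z_i$, a constant number of cleverly chosen sets that ``confuse'' the membership of $z_i$ from the point of view of outside elements, together with sets handling the defectives' own view of each other (recall a defective only ever sees YES answers about itself, so the danger is a defective pinning down the \emph{other} defectives, and a non-defective pinning down a defective). The key device is that for any target defective $z_i$ and any observer $x$, I want to produce an alternative defective set $D'$ with $z_i\notin D'$ that is consistent with all answers $x$ can see; the extra queries should be arranged so that such a $D'$ always exists locally even though it is globally inconsistent with the union of all answers.

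The main obstacle I expect is the simultaneity of the constraint: a single element $x$ sees the answers to \emph{all} queries containing it, so I cannot hide a defective from $x$ by a query that $x$ happens not to belong to — I must ensure that the totality of $\cF_x$ admits a defective-free alternative for each $z_i$. Balancing this against the $d$-separating requirement for Questioner is delicate, because any query that helps Questioner distinguish $D$ from a near-neighbor $D'$ is exactly the kind of query that might let some element inside that query distinguish them too. I would resolve this by making the extra queries large (so that they contain many non-defective elements, giving Questioner global separation) while arranging their \emph{local} intersection patterns at each element to remain ambiguous; concretely, for each defective $z_i$ I would pair it with a ``decoy'' non-defective $w_i$ and ask sets that treat $z_i$ and $w_i$ symmetrically from the viewpoint of any single observer, so that no observer can tell which of $z_i,w_i$ is the true defective, while Questioner — seeing the answers across all pairs together — can.

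Finally I would tally the query cost: roughly $2d\log_2 n$ for the search, plus a constant number per defective (on the order of $5$ per defective, yielding the $5d$ term) for the symmetrization/decoy queries, giving the claimed bound $2d\log_2 n + 5d$ for $n$ large enough; the hypothesis ``$n$ large'' is needed precisely to guarantee enough non-defective decoys $w_i$ and enough room for the alternative sets $D'$ to exist. I would verify the Model $3_d$ property case-by-case (observer defective versus non-defective, target among the $z_i$) using the symmetry of the decoy construction, which is the step most likely to require careful bookkeeping rather than new ideas.
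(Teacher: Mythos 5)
Your proposal is a plan rather than a proof, and its central structural decision---fully identify the defective set $D$ in phase one, then append $O(d)$ ``confusion'' queries in phase two---is exactly the decomposition that does \emph{not} work, and the paper avoids it. If the first phase is an ordinary adaptive search that pins down each defective exactly, then somewhere that search must ask a query whose YES answer, combined with the other answers visible to some element in it, isolates a defective (in the extreme case a singleton query $\{z\}$ answered YES lets $z$ know it is defective, already violating Model $3_d$; no later query can undo knowledge already acquired). The paper's algorithm instead \emph{stops the halving search early}, at candidate sets of size at most $5$ (and at least $3$), so that after $2d\log_2 n$ queries Questioner only knows a family of at most $d$ small sets each containing a defective, and no element has yet localized anything. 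The identification is then \emph{completed} by the extra queries themselves: for each candidate $a_i$ Questioner asks the triple $\{a_i,b_i,c_i\}$ where $b_i,c_i$ are drawn from two distinct large NO-answered queries. These triples are simultaneously the tool that finishes Questioner's identification (he knows $b_i,c_i$ are clean) and the gadget that preserves local ambiguity. You gesture at a decoy idea, but you never face the fact that identification and confusion must be carried by the \emph{same} queries.

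A second concrete gap: your single decoy $w_i$ per defective is insufficient. If you ask $\{z_i,w_i\}$ and $w_i$ is non-defective, then $w_i$ sees a YES answer to a query whose only other member is $z_i$, and since $w_i$ knows it is not defective itself it identifies $z_i$. This is precisely why the paper uses triples with \emph{two} decoys from two different NO-sets: the observer $b_i$ cannot distinguish whether $a_i$ or $c_i$ is responsible for the YES. Finally, the verification of Model $3_d$ requires a genuine case analysis (defective $a_i$, non-defective $a_i$, elements of $B$ and $C$, and generic elements whose only information is a NO answer and the existence of a sibling set of size at least $3$), none of which your proposal carries out; you acknowledge this bookkeeping but it is where the statement is actually proved.
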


\begin{proof} Questioner starts with asking a query $Q$ of size $\lfloor n/2\rfloor$ and its complement. Then in the next round he asks two complementing subsets of size differing by at most one in every query that was answered YES (say $Q_1$ and $Q_2$ with $Q_1\cup Q_2=Q$). He repeats this in every round except if the subset has size at most 5, he stops and does not ask that subset as a query.  Since he asks disjoint sets in every round, he gets at most $d$ YES answers, thus there are at most $2d$ queries in the next round. There are obviously at most $\log_2 n$ rounds. After that we have a family $\cD$ of at most $d$ sets of size at most $5$, each containing at least one defective element. Let $A:=\{a_1,\dots, a_l\}$ be their union ($l\le 5d$), we also know that every defective is in $A$. Let $D_i\in \cD$ be the set that contains $a_i$. As $n$ is large enough, we can assume that there were two queries $B$ and $C$ that were answered NO and have size at least $5d$. Let $b_1,\dots, b_{5d}$ be distinct elements of $B$ and $c_1,\dots, c_{5d}$ be distinct elements of $C$. Then Questioner also asks the queries $\{a_i,b_i,c_i\}$ for $i\le l$.

As we know $b_i$ and $c_i$ are not defective, Questioner finds out if $a_i$ is defective for every $i$. On the other hand, if $a_i$ is defective, every query $Q$ that contains $a_i$ also contains either other elements of $D_i$ or contains $b_i$, thus $a_i$ cannot be sure he is defective. If $a_i$ is not defective, then all he knows is that another element of $D_i$ is defective, but there are more than one such elements. Any other element $x$ appears in a query $Q_1$ that got answer NO. At this point all they know is that there is another set $Q_2$ that contains a defective. $Q_2$ has size at least $3$, thus $x$ does not know at this point which one is defective. If $x=b_i$ or $x=c_i$ for some $i$, he can get additional information about only one element of $Q_2$, thus there are two candidates remaining. Finally, if the answer to $\{a_i,b_i,c_i\}$ is YES, then $x$ does not know if $a_i$ or $c_i$ is the defective.

\end{proof}



It is easy to see that Model $4_d$ still cannot be solved if $i\ge d$ or $j<d$. Indeed, the defectives still get only YES answers, thus less than $d$ of them cannot have any idea about the remaining defectives. On the other hand we will show that there are possible answers such that the defectives together will find out they are the defectives, showing $i\ge d$ is impossible. Let us assume that every answer is YES, unless it is impossible. If Questioner finds out that $D$ is the set of defectives, it means that for every other set $D'$ of size $d$ there was a query at some point that intersected exactly one of $D$ and $D'$. At that point YES was a possible answer, thus the answer to that query was YES. Hence it intersected $D$ and was disjoint from $D'$. Then an element of $D$ knows $D'$ is not the set of defectives, and this holds for every set $D'\neq D$ of size $d$.


\section{Remarks}

We finish this article with some possible directions that can be investigated:

\vspace{2mm}

$\bullet$ In some of the above models we proved that there is a family that solves the model, but did not say anything about its possible size.

\vspace{2mm}

$\bullet$ In case of Model $4_d$ our results can only be considered as the starting point of the investigations. In particular, it would be interesting to see if $i$ can go above $1$. It is tempting to try to extend the proof of Theorem \ref{model3d} to this case, and use a linear hypergraph of large girth. However, it does not work even for $i=2$. The property that the defectives can be identified forces the elements to be contained in many hyperedges, while the property that no 2 elements can identify any of the defectives forces the opposite. 

If the query hypergraph is linear and $d$-separating, it is easy to see that for two elements contained by the same query there must be at least $d$ other sets containing the two elements. This implies almost every element has to be contained in more than $(d+1)/2$ queries.

On the other hand let us consider two elements $x$, $y$ that are not contained in the same hyperedge. The large girth of the query hypergraph implies that there is at most one other element $z$ contained in a hyperedge $Q$ together with $x$ and another hyperedge together with $y$ (if there is no such $z$, then let $Q$ be an arbitrary query containing $x$). Let us assume the answer to $Q$ is NO, and the answer to every other query containing $x$ or $y$ is YES. Then $x$ and $y$ together know that $x$ is not defective. If they cannot identify $y$ as a defective, there cannot be more than $d$ hyperedges containing $x$ or $y$ besides $Q$. This implies almost every element has to be contained in at most $(d+1)/2$ queries.

\vspace{2mm}

$\bullet$ In ~\cite{GV2017} we considered the abstract version of the model introduced by Tapolcai et al.~\cite{TRHHS2014,TRHGHS2016}. Here we extended our models to the case of more defectives. It would be interesting to see if their model can be extended similarly.

\vspace{2mm}

$\bullet$ It is a phenomenon in combinatorial group testing that in most of the models the adaptive version actually means two round version of the problem (see e.g.~\cite{DGV2005}) Recently there was some interest in the $r$ round (or multi-stage) versions of combinatorial group testing problems, where this phenomenon does not hold (see e.g.~\cite{DMT2013,GV2016}). It would be interesting to investigate these models in this context. 

\vspace{2mm}

$\bullet$ One can consider a variant of these models, where instead of requiring that the elements find all (or none) of the defective elements, we require that they identify at least $i$ and/or at most $j$ of them.

\subsection*{Acknowledgement}

We would also like to thank all participants of the Combinatorial Search Seminar at the Alfr\'ed R\'enyi Institute of Mathematics for fruitful discussions.

\end{document}